\newcommand{\fsize}{.7}
\newcommand{\R}{\mathbb R}
\newcommand{\Z}{\mathbb Z}
\newcommand{\Exp}[1]{\exp\left( #1\right)}
\newcommand{\supp}{\operatorname{supp}}
\newcommand{\spann}{\operatorname{span}}
\newcommand{\sou}{M}
\newcommand{\Jg}{J_\mathrm{good}}
\newcommand{\Jr}{J_\mathrm{bad}}
\newcommand{\Ng}{N_\mathrm{good}}
\newcommand{\Nr}{N_\mathrm{bad}}
\numberwithin{equation}{section}
\numberwithin{equation}{section}
\newtheorem{theorem}{Theorem} 
\newtheorem{lemma}[theorem]{Lemma}
\newtheorem*{thm*}{Theorem}
\newtheorem*{lemma*}{Lemma}
\newtheorem*{prop*}{Proposition}
\newtheorem*{assumption}{Assumption A}
\theoremstyle{definition}
\newtheorem*{dfn*}{Definition}
\theoremstyle{remark}
\def\@setauthors{%
  \begingroup
  \def\thanks{\protect\thanks@warning}%
  \trivlist
  \centering\large \@topsep30\p@\relax
  \advance\@topsep by -\baselineskip
  \item\relax
  \author@andify\authors
  \def\\{\protect\linebreak}%
  \authors%
  \ifx\@empty\contribs
  \else
    ,\penalty-3 \space \@setcontribs
    \@closetoccontribs
  \fi
  \endtrivlist
  \endgroup
}
\def\@settitle{\begin{center}%
  \baselineskip14\p@\relax
    \normalfont\LARGE
  \@title
  \end{center}%
}
\pgfplotsset{compat=1.13}
\begin{document}

\title[Perfect reconstructions for multiple sources]{Perfect partial reconstructions for multiple simultaneous sources}

\author[Wittsten]{Jens Wittsten}
\address[Jens Wittsten]{Center for Mathematical Sciences, Lund University, Sweden}
\email{jensw@maths.lth.se}

\author[Andersson]{Fredrik Andersson}
\address[Fredrik Andersson]{Seismic Apparition GmbH, Zurich, Switzerland \&
Institute of Geophysics, ETH-Zurich, Zurich, Switzerland}
\email{fandersson@seismicapparition.com}

\author[Robertsson]{Johan Robertsson}
\address[Johan Robertsson]{Seismic Apparition GmbH, Zurich, Switzerland \&
Institute of Geophysics, ETH-Zurich, Zurich, Switzerland}
\email{jrobertsson@seismicapparition.com}

\author[Amundsen]{Lasse Amundsen}
\address[Lasse Amundsen]{Statoil Research Centre, Trondheim, Norway \&
Department of Geoscience and Petroleum, NTNU, Trondheim, Norway}
\email{lam@statoil.com}

\begin{abstract}
A major focus of research in the seismic industry of the past two decades has been the acquisition and subsequent separation of seismic data using multiple sources fired simultaneously. The recently introduced method of {\it signal apparition} provides a new take on the problem by replacing the random time-shifts usually employed to encode the different sources by fully deterministic periodic time-shifts.
In this paper we give a mathematical proof showing that the signal apparition method results in optimally large regions in the frequency-wavenumber space where exact separation of sources is achieved. These regions are diamond-shaped and we prove that using any other method of source encoding results in strictly smaller regions of exact separation. The results are valid for arbitrary number of sources. Numerical examples for different number of sources (three resp.~four sources) demonstrate the exact recovery of these diamond-shaped regions. The theoretical proofs'  implementation in the field is illustrated by the results of a conducted field test. 
\end{abstract}

\keywords{Acquisition, Inverse problem, Mathematical formulation}

\maketitle

\section{Introduction}

Methods for simultaneous source separation have been a major focus in the seismic industry over the last two decades \cite{beasley1998new}.  Acquiring seismic data without having to wait for the response of one source to be recorded before exciting one or more sources at other shot points promises to radically increase productivity.  This can be essential for instance to make complex wide azimuth seismic surveys cost-effective.  Other constraints such as completing a survey within time-share agreements or in between fish spawning seasons can also greatly benefit from a significant increase in productivity.

The simultaneous source problem is fundamentally an ill-posed problem above a certain frequency and to solve the problem it is necessary to introduce additional constraints \cite{andersson2017flawless}.  A popular method used in industry is based on the science of compressive sensing.  By using random time dithers when exciting sources relative to other sources being excited, it is possible to invert the recorded seismic data for individual source responses under assumptions such as coherency and sparseness of seismic data, see 
\cite{akerberg2008simultaneous},
\cite{berkhout2008changing},
\cite{hampson2008acquisition},
\cite{ikelle2010coding},
\cite{moore2010simultaneous},
\cite{wapenaar2012deblending},
\cite{chen2014iterative},
\cite{langhammer2015triple},
\cite{mueller2015benefit},
\cite{andersson2016deblending},
and references therein.  
The recently introduced concept of signal apparition offers a fundamentally different approach to solve the source separation problem \cite{robertsson2016signal}.  Instead of using random dithers, deterministic periodic variations of excitation times of a source relative to other sources result in the mapping of data into multiple signal cones away from the usual signal cone centered at wavenumber zero and bounded by the propagation velocity of the recording medium ($1500\, \mathrm{m}/\mathrm{s}$ in the case of marine seismic data). Andersson et al.~\cite{andersson2017flawless} showed that for two sources simultaneously acquiring data along two lines over a general 3D heterogeneous sub-surface, the apparition-style acquisition strategy results in a region in the frequency-wavenumber space where the separation of sources is exact which is twice as large as what is possible to achieve using random dithers for instance. Andersson et al.~\cite{andersson2017flawless} referred to these regions as “flawless diamonds” and demonstrated that exact separation of sources in these regions is unique to signal apparition.
Outside the flawless diamonds, the simultaneous source problem is solved using additional constraints to tackle the otherwise ill-posed problem \cite{andersson2016seismic}, \cite{andersson2017analytic}.

In this paper we generalize the findings of Andersson et al.~\cite{andersson2017flawless} to that of $\sou$ simultaneous sources \cite{andersson2017multisource}, \cite{amundsen2017multisource}.  We show that the method of signal apparition results in optimally large regions in the frequency-wavenumber space for exact separation of sources.  Moreover, we show that all other methods for source encoding results in regions of exact separation which are smaller than that obtained by encoding the sources using signal apparition.  The main part of the paper contains the proofs of two theorems that demonstrate that signal apparition is unique and optimal in the sense of exactly separating the response of $\sou$ sources.  The theorems are supported by lemmas included in the appendix.  Following the theory section we present a numerical example, as well as the results of a field test which describes the implementation of the theory in a practical environment.

\section{Theory}\label{section:theory}
In this section we prove the above mentioned theoretic implications of using signal apparition to encode sources during simultaneous source acquisition, namely, that signal apparition optimizes the area of exact separation of the responses of simultaneous sources (Theorem \ref{maintheorem2}), and it is the only method to do so (Theorem \ref{maintheorem}). 
We consider a marine environment in which $\sou $ simultaneous sources are acquired along a single line over a 
complex 3D sub-surface. We remark that this type of source modulation along a line
by no means limits the application to 2D data, and it is straightforward to generalize the results presented in this paper to source modulation in a plane. Note also that, in the case of 3D towed streamer surveys, the anticipated shot timing and position errors that can occur due, e.g., to feathering, will have a negligible impact on the results, and we refer the readers to Wittsten et al.~\cite{wittsten2018perturbations} for an in depth discussion on the influence of perturbations on the methodology.

We first make an observation. Consider an experiment where a source is fired on equidistant shot positions, spaced a distance $\triangle_x$ apart along a line (of infinite length), and recorded on a stationary receiver.  This spatial sampling frequency corresponds to a Nyquist wavenumber of $k_S=1/(2\triangle_x)$, and the slowest possible apparent velocity in the recordings is identical to the propagation velocity of the recording medium. This results in a maximum frequency (Nyquist frequency), depending on the spatial sampling interval $\triangle_x$, below which all energy is unaliased. Moreover, after a temporal and spatial Fourier transform ($\omega  {k_x}$), all signal energy is confined to a ``signal cone'' bounded by the sound speed of the recording medium. This also means that large parts of the $\omega  {k_x}$-spectrum inside the Nyquist frequency and wavenumber are zero.

\definecolor{ffqqqq}{rgb}{.6,0.,0.}
\definecolor{qqffqq}{rgb}{0.,.6,0.}
\definecolor{qqqqff}{rgb}{0.,0.,.6}
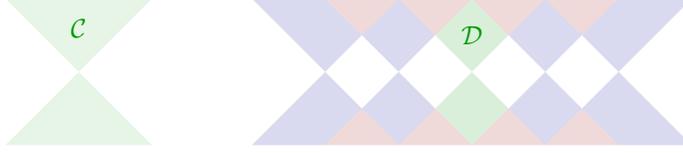
\begin{figure}
	\centering
	\begin{tikzpicture}[line cap=round,line join=round,x=.195cm,y=.195cm]
	\clip(-10.6,-5.3) rectangle (10.6,5.3);
	\fill[color=qqffqq,fill=qqffqq,fill opacity=0.1] (0.,0.) -- (-5.,5.) -- (5.,5.) -- cycle;
	\fill[color=qqffqq,fill=qqffqq,fill opacity=0.1] (0.,0.) -- (5.,-5.) -- (-5.,-5.) -- cycle;
	\draw (0,3) node[qqffqq]{$\mathcal{C}$};
	\end{tikzpicture}
	\begin{tikzpicture}[line cap=round,line join=round,x=.195cm,y=.195cm]
	\clip(-15.6,-5.3) rectangle (15.6,5.3);
	\fill[color=qqffqq,fill=qqffqq,fill opacity=0.15] (0.,0.) -- (-2.5,2.5) -- (0.,5) -- (2.5,2.5) -- cycle;
	\fill[color=qqffqq,fill=qqffqq,fill opacity=0.15] (0.,0.) -- ( 2.5,-2.5) -- (0.,-5) -- (-2.5,-2.5) -- cycle;
	\fill[color=qqqqff,fill=qqqqff,fill opacity=0.15] (5.,0.) -- (2.5,2.5) -- (5.,5) -- (7.5,2.5) -- cycle;
	\fill[color=qqqqff,fill=qqqqff,fill opacity=0.15] (5.,0.) -- ( 7.5,-2.5) -- (5.,-5) -- (2.5,-2.5) -- cycle;
	\fill[color=qqqqff,fill=qqqqff,fill opacity=0.15] (-5.,0.) -- (-7.5,2.5) -- (-5.,5) -- (-2.5,2.5) -- cycle;
	\fill[color=qqqqff,fill=qqqqff,fill opacity=0.15] (-5.,0.) -- ( -2.5,-2.5) -- (-5.,-5) -- (-7.5,-2.5) -- cycle;
	\fill[color=qqqqff,fill=qqqqff,fill opacity=0.15] (-10.,0.) -- (-7.5,2.5) -- (-10,5) -- (-15.,5.) -- cycle;
	\fill[color=qqqqff,fill=qqqqff,fill opacity=0.15] (10.,0.) -- (7.5,2.5) -- (10,5) -- (15.,5.) -- cycle;
	\fill[color=qqqqff,fill=qqqqff,fill opacity=0.15] (10.,0.) -- (7.5,-2.5) -- (10,-5) -- (15.,-5.) -- cycle;
	\fill[color=qqqqff,fill=qqqqff,fill opacity=0.15] (-10.,0.) -- (-7.5,-2.5) -- (-10,-5) -- (-15.,-5.) -- cycle;
	\fill[color=ffqqqq,fill=ffqqqq,fill opacity=0.15] (-2.5,-2.5) -- (-5,-5) -- (0,-5) -- cycle;
	\fill[color=ffqqqq,fill=ffqqqq,fill opacity=0.15] (-2.5, 2.5) -- (-5, 5) -- (0, 5) -- cycle;
	\fill[color=ffqqqq,fill=ffqqqq,fill opacity=0.15] ( 2.5, 2.5) -- ( 5, 5) -- (0, 5) -- cycle;
	\fill[color=ffqqqq,fill=ffqqqq,fill opacity=0.15] ( 2.5,-2.5) -- ( 5,-5) -- (0,-5) -- cycle;
	\fill[color=ffqqqq,fill=ffqqqq,fill opacity=0.15] ( 7.5,-2.5) -- ( 10,-5) -- (5,-5) -- cycle;
	\fill[color=ffqqqq,fill=ffqqqq,fill opacity=0.15] ( 7.5,2.5) -- ( 10,5) -- (5,5) -- cycle;
	\fill[color=ffqqqq,fill=ffqqqq,fill opacity=0.15] ( -7.5,-2.5) -- ( -5,-5) -- (-10,-5) -- cycle;
	\fill[color=ffqqqq,fill=ffqqqq,fill opacity=0.15] ( -7.5,2.5) -- ( -5,5) -- (-10,5) -- cycle;
	\draw (0,2.5) node[qqffqq]{$\mathcal{D}$};
	\end{tikzpicture} 
	\caption{The domains $\mathcal{C}$ (conic) and $\mathcal{D}$ (diamond) are illustrated in green. The translated cones are shown in blue, and the overlappinging regions are shown in red.  \label{cone_diamond}}
\end{figure}

Consider now the case of $\sou $ sources fired simultaneously in a manner which varies in a specific way between shot locations. 
If an amplitude variation $a_n$ and a shift variation $\tau_n$ is applied to the $n$th source, the recorded data will be of the form
\begin{equation} \label{simsource}
d(t,j) = \sum_{n=1}^\sou  a_n(j) f_n(t+\tau_n(j),\triangle_xj), \quad j=-\sou m, \dots, \sou m-1,
\end{equation}
with each $f_n$ representing seismic data recorded at a certain depth, corresponding to one common receiver gather. We may without loss of generality assume that $\tau_1\equiv 0$.
As explained above, if $\mathcal{F}(f_n)$ denotes the (continuous) temporal and spatial Fourier transform of $f_n$, the support of each $\mathcal{F}(f_n)$ will be contained in the conic set
\begin{equation*}
	\mathcal{C} = \{(\omega, {k_x}): \lvert\omega\rvert>c_0  \cdot\lvert {k_x}\rvert \}\end{equation*}
	where $c_0\approx 1500\, \mathrm{m/s}$ assuming a marine environment.
Introduce the diamond shaped set
\begin{equation*}
\mathcal{D} = \mathcal{C} \setminus \{( \omega, {k_x} ) :  \lvert\omega\rvert \ge c_0 \cdot\lvert k_x  \pm 1/(\sou\triangle_x)\rvert \},
\end{equation*}
and define 
\begin{equation}\label{omega0}
\omega_0=\frac{c_0}{2\sou\triangle_x}.
\end{equation}
Then $0<\omega<2\omega_0$ when $(\omega,k_x)\in \mathcal{D}$,
and all energy is unaliased when $0<\omega<\omega_0$.
The domains $\mathcal{C}$ and $\mathcal{D}$ are depicted in Figure \ref{cone_diamond}. The widest parts of the two green diamonds in Figure \ref{cone_diamond} are located precisely at frequency $\pm \omega_0\, \mathrm{Hz}$.
We remark that using signal apparition allows for each $\mathcal{F}(f_n)$ to be perfectly reconstructed in $\mathcal{D}$, see Andersson et al.~\cite{andersson2017multisource} and Amundsen et al.~\cite{amundsen2017multisource}.
Supposing that the $\sou $ sources $f_n$ are sampled in $2\sou m$ points for some integer $m$, we define the semi-discrete Fourier transform of $f_n$ as 
\begin{equation*}
\widehat{f}_n (\omega,k)  =   \sum_{j} \Exp{\frac{-2 \pi i k j}{2\sou m}} \int_{-\infty}^{\infty}  f_n(t,\triangle_xj) \Exp{-2\pi i t \omega} \, dt . 
\end{equation*}
Using the Poisson summation formula and the condition $\supp\mathcal{F}(f_n)\subset\mathcal{C}$, it is straightforward to check that
\begin{equation}\label{fourierrelation}
\widehat{f}_n (\omega,k) =\frac{1}{\triangle_x}\mathcal{F}(f_n)\bigg(\omega,\frac{k}{2\sou m\triangle_x}\bigg)
\end{equation}
for $(\omega,k)$ such that $\lvert\omega\rvert< 2\omega_0$ and $-\sou m\le k\le \sou m-1$. This provides a relation between $\widehat{f}_n$ and $\mathcal{F}(f_n)$ in the domain of interest.

We will now present the main results of the paper in the following two theorems. The consequence of the first theorem is that the only way that $\mathcal{F}(f_n)(\omega, {k_x})$, $1\le n\le \sou $, can be uniquely determined in the diamond-shaped set $\mathcal{D}$ is if an apparition style of simultaneous source sampling is being used.

\begin{theorem}\label{maintheorem}
Suppose that data is given by \eqref{simsource}, with $\supp \mathcal{F}(f_n)\subset\mathcal{C}$ for $1 \le n \le \sou $. For $\mathcal{F}(f_n)(\omega, {k_x})$ to be uniquely determined when $(\omega, {k_x}) \in \mathcal{D}$, it is required that $a_n$ and $\tau_n$ are periodic of period $\sou$, i.e.,
\begin{equation}\label{oddeven}
a_n(j\bmod \sou)=a(j),\quad  \tau_n(j \bmod \sou )=\tau_n(j)
\end{equation}
for $1 \le n \le \sou $.
\end{theorem}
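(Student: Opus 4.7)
The plan is to pass to the Fourier domain in both $t$ and $j$, thereby converting the uniqueness question into a linear-algebra condition on the discrete Fourier coefficients of the encoding sequences. For each fixed $\omega$ I combine the amplitude and phase modulations into a single encoding sequence $b_n^\omega(j) := a_n(j)\, e^{2\pi i \omega \tau_n(j)}$ on $-\sou m \le j \le \sou m - 1$, and expand $b_n^\omega$ via its DFT coefficients $\widehat{b}_n^\omega(\ell)$. Taking the semi-discrete Fourier transform of \eqref{simsource} in $(t,j)$ and reordering sums then yields a convolution identity of the form
\[
\widehat{d}(\omega,k) \;=\; \frac{1}{2\sou m}\sum_{n=1}^{\sou}\sum_{\ell}\widehat{b}_n^\omega(\ell)\,\widehat{f}_n(\omega,k-\ell),
\]
reducing the problem to analyzing how the $\widehat{b}_n^\omega$ couple the discrete shifts of $\widehat{f}_n$.

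Next I would exploit the diamond geometry. By \eqref{fourierrelation} together with the definition of $\mathcal{D}$, a point $(\omega,k_x)\in\mathcal{D}$ corresponding to $k=2\sou m\triangle_x k_x$ has every translate $(\omega,k_x+r/(\sou\triangle_x))$ with $r\neq 0$ lying outside the cone $\mathcal{C}$. Consequently, among the shifts $\ell$ that are multiples of $2m$, only $\ell=2mr$ contributes to the ``$\widehat{f}_n(\omega,k)$ column'' in the expression for $\widehat{d}(\omega,k+2mr)$. Sampling $\widehat{d}$ at the $\sou$ collinear points $\{(\omega,k+2mr):r=0,\ldots,\sou-1\}$ therefore collapses the periodic part of the convolution to an $\sou\times\sou$ linear system in the unknowns $\widehat{f}_n(\omega,k)$, $n=1,\ldots,\sou$, plus a remainder $R_r(\omega,k)$ collecting the contributions from $\ell\not\equiv 0\pmod{2m}$ paired with values of $\widehat{f}_n$ at positions inside $\mathcal{C}$ different from $(\omega,k_x)$.

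The main obstacle is to upgrade the observation ``the off-lattice coefficients $\widehat{b}_n^\omega(\ell)$ with $\ell\not\equiv 0\pmod{2m}$ must vanish'' from a wish into a genuine implication. I would argue by contrapositive: assuming $\widehat{b}_{n_0}^\omega(\ell_0)\neq 0$ for some $\ell_0\not\equiv 0\pmod{2m}$ on an open set of $\omega$, I construct a non-trivial kernel element $\{h_n\}$ of the encoding map --- cone-supported functions producing $d\equiv 0$ in \eqref{simsource} --- whose Fourier transforms do not vanish identically on $\mathcal{D}$. The construction freely prescribes $\widehat{h}_n$ at positions inside $\mathcal{C}$ that are accessed only through the off-lattice coefficients, then propagates these prescribed values into non-trivial entries on $\mathcal{D}$ via a rank-nullity count in the augmented system from the previous paragraph. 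I expect the appendix lemmas to carry out precisely this construction. This forces $\widehat{b}_n^\omega(\ell)=0$ for $\ell\not\equiv 0\pmod{2m}$, equivalently that $b_n^\omega$ is $\sou$-periodic in $j$ for every $\omega$ in the relevant range.

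The final step separates periodicity of $b_n^\omega$ into individual periodicity of $a_n$ and $\tau_n$. Since $|b_n^\omega(j)|=|a_n(j)|$ is independent of $\omega$, $\sou$-periodicity of $b_n^\omega$ yields $|a_n(j+\sou)|=|a_n(j)|$; evaluating at $\omega=0$ (where $b_n^0=a_n$) strengthens this to $a_n(j+\sou)=a_n(j)$. Substituting back gives $e^{2\pi i\omega(\tau_n(j+\sou)-\tau_n(j))}\equiv 1$ for $\omega$ in an open set, which forces $\tau_n(j+\sou)=\tau_n(j)$, exactly \eqref{oddeven}.
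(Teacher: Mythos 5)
Your overall reduction matches the paper's strategy: Fourier transform in $t$ and $j$, write the data as $\sum_n w_n^\omega \ast \widehat f_n(\omega,\cdot)$ with $w_n^\omega$ the DFT of $j\mapsto a_n(j)e^{2\pi i\omega\tau_n(j)}$, show the off-lattice coefficients $w_n^\omega(\ell)$, $\ell\not\equiv 0 \pmod{2m}$, must vanish, and then decode periodicity of $a_n$ and $\tau_n$ from periodicity of $a_n(j)e^{2\pi i\omega\tau_n(j)}$ using the continuum of admissible $\omega$. But the central implication --- that uniqueness on $\mathcal{D}$ forces $\supp w_n^\omega\subset 2m\Z$ --- is exactly where you stop: you declare it the ``main obstacle,'' sketch a contrapositive via a ``rank-nullity count,'' and then write that you expect the appendix lemmas to carry it out. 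That is the theorem's actual content, and a dimension count will not deliver it. The encoding map from cone-supported data to blended data has an enormous kernel (its domain dimension exceeds the codomain dimension by roughly $\sou(2m-1)$ already in the relevant band), so the existence of kernel elements is automatic and says nothing; what must be shown is that if some off-lattice coefficient is nonzero then a kernel element exists that is \emph{nonzero at the recoverable coordinate}. The paper's Lemma \ref{lemma1} does this by a structural argument, not a counting one: it fixes $\omega$ in the top band $\frac{2m-1}{m}\omega_0\le\omega<2\omega_0$, where the support of $k\mapsto\widehat f_n(\omega,k)$ is $[-2m+1,2m-1]$ and the diamond cross-section is the single point $k=0$ (so the uniqueness hypothesis becomes Assumption A with $\Jg=\{0\}$), Gram--Schmidt-orthonormalizes the encodings $\mathbf w_n$ into $\mathbf v_n$, shows each $\mathbf v_n$ vanishes on $[-2m+1,2m-1]\setminus\{0\}$, builds an explicit basis of $V^\perp$ from shifted $\mathbf v_n$'s, proves $\mathbf e_{2m}\in V$, and iterates to pin the supports to $2m\Z$. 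Your proposal also does not make the restriction to this top frequency band; working at a general point of $\mathcal{D}$ with an ``$\sou\times\sou$ system plus remainder'' is the natural way to verify that apparition sampling \emph{suffices}, but it does not set up the necessity argument, and the ``augmented system'' on which your rank-nullity count would act is never specified.

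A smaller but real flaw is the last step for $a_n$: periodicity of $b_n^\omega=a_n e^{2\pi i\omega\tau_n}$ is only available for $\omega$ in the high band where the diamond condition bites, so you cannot ``evaluate at $\omega=0$''; indeed for small $\omega$ the problem is well posed for many non-periodic encodings, so nothing forces $b_n^\omega$ to be periodic there. This is fixable without $\omega=0$ (as in the paper: take absolute values to get periodicity of the amplitudes, then use that $\omega(\tau_n(j+\sou)-\tau_n(j))$ must be integer-valued on an interval of $\omega$ to kill the phase), but as written the step is unjustified. In summary, the outer shell of your argument is the right one, but the heart of the proof --- the paper's Lemma \ref{lemma1} --- is missing rather than proved.
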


\begin{proof}
Applying a one-dimensional Fourier transform with respect to $t$ to \eqref{simsource} gives
\begin{equation}\label{t-Fourier}
\mathcal{F}_t (d)(\omega,j) = \sum_{n=1}^\sou a_n(j)\Exp{-2 \pi i \tau_n(j) \omega} \mathcal{F}_t(f_n)(\omega,\triangle_xj)
\end{equation}
for $j=-\sou m, \dots, \sou m-1$. 
For fixed $\omega$, let 
$w_n^\omega(k)$ be the discrete Fourier transform of $j\mapsto a_n(j)\Exp{2 \pi i \tau_n(j) \omega}$ evaluated at $k$, i.e.,  
\begin{equation*}
w^\omega_n(k) = \sum_j a_n(j) \Exp{-2 \pi i \left(\tau_n(j) \omega+\frac{jk}{2\sou m }\right)}.
\end{equation*}
If we apply a discrete Fourier transform to \eqref{t-Fourier}, we obtain 
\begin{equation*}
\label{conv}
\widehat{d}(\omega,k) = \sum_{n=1}^\sou  w_n^\omega \ast \widehat{f}_n(\omega,k),
\end{equation*}
where the (discrete) convolution acts on the second variable.
Let us now consider a fixed $\omega$ such that 
\begin{equation} \label{omega_condition}
\frac{2m-1}{m}\cdot\omega_0 \le \omega < 2\omega_0.
\end{equation}
In view of \eqref{fourierrelation} and the support condition $\supp\mathcal{F}(f_n)\subset\mathcal{C}$, this implies that the function $k\mapsto \widehat{f}_n(\omega,k)$ has support contained in $[-2m+1,2m-1]$. Similarly, the assumption that $\mathcal{F}(f_n)(\omega, {k_x})$ can be uniquely determined when $(\omega, {k_x}) \in \mathcal{D}$ turns into a condition of the type given in Lemma \ref{lemma1}. Hence, applying the lemma we conclude that for each $\omega$ satisfying \eqref{omega_condition}, it holds that $w_n^\omega(l)=0$ unless $l=2ml'$ where $l'$ is an integer.
In particular, the Fourier inversion formula gives
\begin{align*}
a_n(j)\Exp{2\pi i \tau_n(j)\omega}&=\frac{1}{2\sou m}\sum_{l=-\sou m}^{\sou m-1} \Exp{\frac{2\pi i l j}{2\sou m}}w_n^\omega(l)\\ 
&=\frac{1}{2\sou m}\sum_{l'=-\sou '}^{\sou '-1} \Exp{\frac{2\pi i l'j}{\sou }}w_n^\omega(2ml')
\notag\end{align*}
if $\sou =2\sou'$ is even, and
\begin{equation*}
a_n(j)\Exp{2\pi i \tau_n(j)\omega}=\frac{1}{2\sou m}\sum_{l'=-\sou '}^{\sou '} \Exp{\frac{2\pi i l'j}{\sou }}w_n^\omega(2ml')
\end{equation*}
if $\sou =2\sou'+1$ is odd.
In either case the expressions above clearly have period $\sou $, so
\begin{equation*}
a_n(j+\sou )\Exp{2\pi i \tau_n(j+\sou )\omega}=a_n(j)\Exp{2\pi i \tau_n(j)\omega}.
\end{equation*}
Taking absolute values we immediately infer that all $a_n(j)$ have period $\sou $. Taking logarithms and dividing by $2\pi i$ we then get
\begin{equation*}
(\tau_n(j+\sou)-\tau_n(j))\omega= \kappa(\omega)
\end{equation*}
for some integer-valued function $\kappa : \omega\mapsto \kappa(\omega)\in\mathbb{Z}$. However, according to \eqref{omega_condition} this has to hold for a continuous range of values $\omega$. Since the left-hand side is a continuous function of $\omega$ it follows that $\kappa$ is constant, and varying $\omega$ slightly shows that the only possible choice is $\kappa(\omega)\equiv 0$. Therefore $\tau_n$ is periodic with period $\sou $.
\end{proof}

We now present the second main result of the paper. In view of Theorem \ref{maintheorem}, the result shows that signal apparition is optimal in the sense that it maximizes the region of exact separation of multiple sources excited during seismic acquisition. In fact, using any other type of sampling method results in a strictly smaller domain of exact separation.

\begin{theorem}\label{maintheorem2}
Suppose that data is given by \eqref{simsource}, with $\supp \mathcal{F}(f_n)\subset\mathcal{C}$ for $1 \le n \le \sou $. Let $\mathcal{E}\subset\mathcal{C}$ and suppose that $\mathcal{F}(f_n)(\omega, {k_x})$ is uniquely determined when $(\omega, {k_x}) \in \mathcal{E}$. Then the area of $\mathcal{E}$ cannot be larger than the area of $\mathcal{D}$, and if the areas are equal then $\mathcal{E}=\mathcal{D}$.
\end{theorem}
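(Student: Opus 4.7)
The plan is to reduce to a slice-by-slice analysis in $\omega$ via Fubini: it suffices to show that $|\mathcal{E}_\omega| \le |\mathcal{D}_\omega|$ for every $\omega$, with equality for a.e.\ $\omega$ forcing $\mathcal{E}_\omega = \mathcal{D}_\omega$ modulo a null set. The cases $|\omega| \le \omega_0$ and $|\omega|\ge 2\omega_0$ can be handled separately and are comparatively easy: for $|\omega|\le\omega_0$ one has $\mathcal{C}_\omega=\mathcal{D}_\omega$, so the inclusion $\mathcal{E}_\omega\subseteq\mathcal{C}_\omega$ yields the bound directly; for $|\omega|\ge 2\omega_0$ the slice $\mathcal{D}_\omega$ is empty, and a similar orbit argument to the one below rules out unique recovery on any positive-measure subset.

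The main case is $\omega_0 < |\omega| < 2\omega_0$. First I would pass to the discrete-Fourier picture used in the proof of Theorem~\ref{maintheorem}: after applying $\mathcal{F}_t$ and the spatial DFT the data obeys $\widehat{d}(\omega,k)=\sum_n (w_n^\omega * \widehat{f}_n(\omega,\cdot))(k)$, and relation \eqref{fourierrelation} links the unknowns to $\mathcal{F}(f_n)(\omega,k_x)$. In this regime the cone cross-section $\mathcal{C}_\omega=(-\omega/c_0,\omega/c_0)$ has length strictly between $1/(\sou\triangle_x)$ and $2/(\sou\triangle_x)$, so $\mathcal{C}_\omega$ overlaps its $\pm 1/(\sou\triangle_x)$-translates and $\mathcal{D}_\omega$ is precisely the part of $\mathcal{C}_\omega$ lying outside all such overlaps. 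Next I would decompose the Nyquist interval into orbits of the subgroup $(1/(\sou\triangle_x))\mathbb{Z}$: each orbit contains at most two cone points, with exactly one iff the orbit meets $\mathcal{D}_\omega$ and exactly two iff it meets the overlap $\mathcal{C}_\omega\setminus\mathcal{D}_\omega$.

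Then I would apply a variant of Lemma~\ref{lemma1} (tailored as in the proof of Theorem~\ref{maintheorem}, but phrased for the arbitrary admissible set $\mathcal{E}$ rather than $\mathcal{D}$) to show that the forward map, restricted to the $\sou$ source-values at any two-cone-point orbit, necessarily admits a non-trivial kernel that projects non-trivially onto each of the two cone points' unknowns. This forces $\mathcal{F}(f_n)(\omega,k_x)$ to fail to be uniquely determined at every $k_x\in\mathcal{C}_\omega\setminus\mathcal{D}_\omega$, giving $\mathcal{E}_\omega\subseteq\mathcal{D}_\omega$ up to measure zero and hence $|\mathcal{E}_\omega|\le|\mathcal{D}_\omega|$. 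Integrating over $\omega$ then yields $|\mathcal{E}|\le|\mathcal{D}|$, and in the equality case $|\mathcal{E}|=|\mathcal{D}|$ combined with $\mathcal{E}_\omega\subseteq\mathcal{D}_\omega$ a.e.\ forces $\mathcal{E}_\omega=\mathcal{D}_\omega$ a.e., and therefore $\mathcal{E}=\mathcal{D}$.

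The main obstacle is the orbit-wise non-recoverability step on the two-cone-point orbits. For apparition-style modulations the system decouples into independent $\sou\times 2\sou$ per-orbit subsystems and non-recoverability is a direct rank computation (the $\sou\times\sou$ ``other cone point'' coefficient block cannot vanish without the modulation itself being trivial); for general modulations the per-orbit subsystems are coupled across orbits through the full convolution structure of $w_n^\omega$, and it is precisely the control of this coupling --- showing that the cone-support constraint prevents the couplings from conspiring to restore injectivity at any overlap point --- that demands the careful support/rank analysis of Lemma~\ref{lemma1} or its appropriate extension.
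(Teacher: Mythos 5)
Your overall frame (fix $\omega$, pass to the discrete picture via \eqref{fourierrelation}, and compare slice by slice with the diamond) is the same as the paper's, but the step that carries all the weight is missing, and it is not supplied by a ``variant of Lemma \ref{lemma1}''. You claim that for an \emph{arbitrary} admissible modulation, unique recovery fails at \emph{every} wavenumber of the overlap $\mathcal{C}_\omega\setminus\mathcal{D}_\omega$, i.e.\ the set containment $\mathcal{E}_\omega\subseteq\mathcal{D}_\omega$ for a.e.\ $\omega$. You give no argument for this: as you yourself note, for a general (non-apparition) modulation the per-orbit subsystems do not decouple, so the ``non-trivial kernel projecting onto both cone points of a two-point orbit'' is exactly what has to be proved, and Lemma \ref{lemma1} cannot supply it --- its hypotheses ($l=m-1$, $\Jg=\{0\}$) already presuppose that recovery occurs only at the central point of the extremal slice, which is part of what must be shown. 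Your containment claim is also strictly stronger than what the theorem needs and than what the paper establishes: the paper proves only a per-slice \emph{counting} bound, Lemma \ref{lemma:cardinality}, namely that the number $\Ng(l)$ of recoverable discrete wavenumbers in the slice \eqref{omega_condition2} is at most $2m-2l-1$, the number of points in the corresponding $\mathcal{D}$-slice; the proof of that lemma is a genuine dimension/support argument (producing $N_0+1$ consecutive ``bad'' indices and contradicting Assumption A), not a per-orbit rank computation. Whether your pointwise statement is even true at intermediate frequencies is unclear; the paper pins down the \emph{location} of the recoverable set only in the extremal slice $l=m-1$, and that already requires the full iteration argument of Lemma \ref{lemma:Disbiggest}.

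A second, smaller issue is the equality case. The counting bound alone does not identify $\mathcal{E}$; the paper must additionally invoke Lemma \ref{lemma:Disbiggest} (if only one point is recoverable at $l=m-1$ it must be $k=0$), then Lemma \ref{lemma1} to fix the support of the $w_n^\omega$, and then rerun the periodicity argument of Theorem \ref{maintheorem} to conclude that apparition was used and hence $\mathcal{E}=\mathcal{D}$. In your plan this is replaced by ``a.e.\ containment plus equal areas gives $\mathcal{E}_\omega=\mathcal{D}_\omega$ a.e., hence $\mathcal{E}=\mathcal{D}$,'' which rests on the unproved containment and silently upgrades an almost-everywhere identity to an exact one; likewise, dismissing the regime $\lvert\omega\rvert\ge 2\omega_0$ by ``a similar orbit argument'' is an assertion, not a proof. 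In short: the slicing skeleton is right, but the core lemma your argument needs is neither proved nor available as a routine extension of Lemma \ref{lemma1}; the paper's actual mechanism is Lemma \ref{lemma:cardinality} for the area bound, combined with Lemmas \ref{lemma:Disbiggest} and \ref{lemma1} and the Theorem \ref{maintheorem} argument for the equality case.
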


\begin{proof}
As in the proof of Theorem \ref{maintheorem} we apply a semi-discrete Fourier transform to the data \eqref{simsource} and obtain
\begin{equation*}
\widehat{d}(\omega,k) = \sum_{n=1}^\sou  w_n^\omega \ast \widehat{f}_n(\omega,k),
\end{equation*}
where the (discrete) convolution acts on the second variable, and $w_n^\omega(k)$ for fixed $\omega$ is
the discrete Fourier transform of $j\mapsto a_n(j)\Exp{2 \pi i \tau_n(j) \omega}$ evaluated at $k$. We then fix $\omega$ such that 
\begin{equation}\label{omega_condition2}
\frac{m+l}{m}\cdot\omega_0 \le \omega < \frac{m+l+1}{m}\cdot\omega_0,
\end{equation}
where $0\le l\le m-1$ is arbitrary.
Using \eqref{fourierrelation} we now translate the assumption that $\mathcal{F}(f_n)(\omega, {k_x})$ can be uniquely determined when $(\omega, {k_x}) \in \mathcal{E}$ into a similar condition for the function $k\mapsto \widehat{f}_n(\omega,k)$ via
\begin{equation*}
(\omega,k/(2\sou m\triangle_x))\in\mathcal{E}\quad \Longleftrightarrow \quad k\in\Jg
\end{equation*}
for some subset of integers $\Jg\subset[-m-l,m+l]$.
One easily verifies that the conditions of Lemma \ref{lemma:cardinality} are satisfied, so the number $\Ng(l)$ of $k$ for which $\widehat{f}_n(\omega,k)$ is uniquely determined satisfies 
\begin{equation*}
\Ng(l)\le 2m-2l-1.
\end{equation*}
It is easy to see that for $\omega$ satisfying \eqref{omega_condition2} we have $(\omega,k/(2\sou m\triangle_x))\in\mathcal{D}$ if and only if $-(m-l-1)\le k\le m-l-1$, so $\Ng(l)$ is not larger than the number of $k$ such that $(\omega,k/(2\sou m\triangle_x))\in\mathcal{D}$.
Since $l$ was arbitrary this proves that $\mathcal{E}$ cannot have larger area than $\mathcal{D}$. The same argument shows that if the areas of $\mathcal{E}$ and $\mathcal{D}$ are the same, 
 it is required that  $\Ng(l)=2m-2l-1$ for each $0\le l\le m-1$. In particular, when $l=m-1$ we have $\Ng(m-1)=1$, which by Lemma \ref{lemma:Disbiggest} can only happen if $\Jg=\{0\}$. An application of Lemma \ref{lemma1} then completely determines the support of all $w_n^\omega$, and a repetition of the arguments at the end of the proof of Theorem \ref{maintheorem} shows that \eqref{oddeven} holds. Clearly, an apparition style sampling has then been used, so the region $\mathcal{E}$ in which perfect reconstruction is obtained must be equal to the diamond-shaped set $\mathcal{D}$.
\end{proof}

\section{Results}\label{results}

In this section we demonstrate the theoretical results proved above by performing numerical simulations on synthetic data, and by describing the results of a field test in which the theory was implemented.

\subsection{Numerical simulations}\label{ss:num}
We begin by demonstrating the separation of data in the diamond-shaped region $\mathcal D$ by performing two different simultaneous source experiments. In one experiment the data correspond to three simultaneous sources ($M=3$), and in the other experiment the data correspond to four simultaneous sources ($M=4$).

	\begin{figure*}
	\centering
	\begin{tikzpicture}[scale=\fsize]
	\begin{axis}[ height=8.5cm, width=\axisdefaultwidth,
	major tick length=1pt,
	xmin=-20,
	xmax=20,
	ymin=0,
	ymax=31,
	xlabel={Wavenumber ($10^{-3}/\mathrm{m}$)},
	ylabel={Frequency (Hz)},
	enlarge x limits=0.005,
	enlarge y limits=0.005,
	]
	\addplot graphics[xmin=-20,ymin=0,xmax=20,ymax=31,
	includegraphics={trim=0pt 450pt 0pt 110pt, clip}
	]{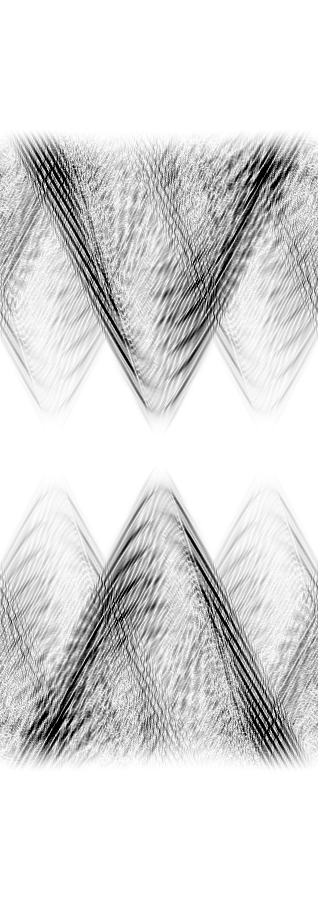};  
	\end{axis}
	\end{tikzpicture} 
	\begin{tikzpicture}[scale=\fsize]
	\begin{axis}[ height=8.5cm, width=\axisdefaultwidth,
	major tick length=1pt,
	xmin=-20,
	xmax=20,
	ymin=0,
	ymax=31,
	xlabel={Wavenumber ($10^{-3}/\mathrm{m}$)},
	ylabel={Frequency (Hz)},
	enlarge x limits=0.005,
	enlarge y limits=0.005,
	]
	\addplot graphics[xmin=-20,ymin=0,xmax=20,ymax=31,
	includegraphics={trim=0pt 450pt 0pt 110pt, clip}]{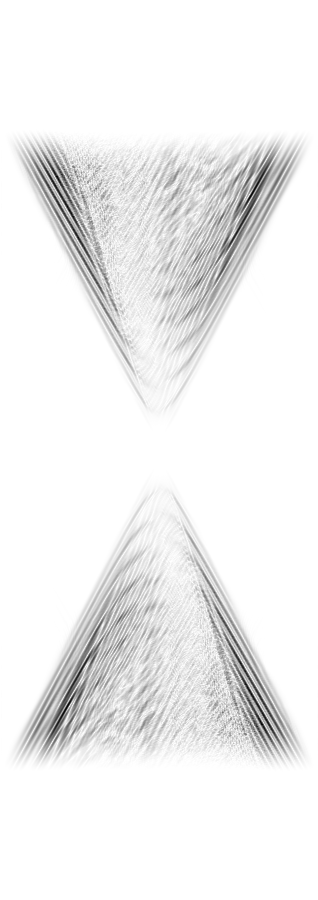};  
	\end{axis}
	\end{tikzpicture}	 
	\\
	\begin{tikzpicture}[scale=\fsize]
	\begin{axis}[ height=8.5cm, width=\axisdefaultwidth,
	major tick length=1pt,
	xmin=-20,
	xmax=20,
	ymin=0,
	ymax=31,
	xlabel={Wavenumber ($10^{-3}/\mathrm{m}$)},
	ylabel={Frequency (Hz)},
	enlarge x limits=0.005,
	enlarge y limits=0.005,
	]
	\addplot graphics[xmin=-20,ymin=0,xmax=20,ymax=31,
	includegraphics={trim=0pt 450pt 0pt 110pt, clip}]{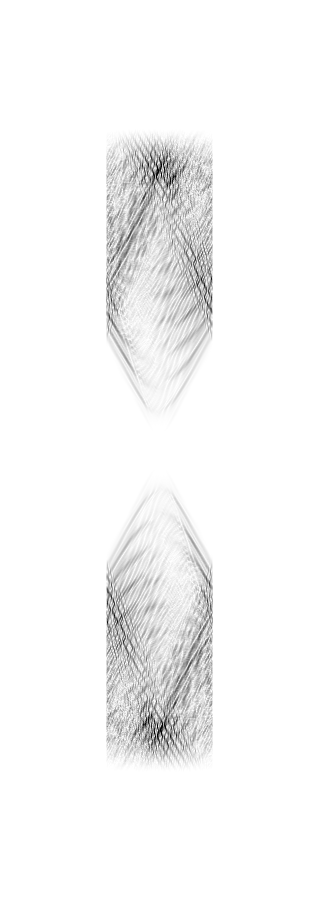};
	\end{axis}
	\end{tikzpicture} 
	\begin{tikzpicture}[scale=\fsize]
	\begin{axis}[ height=8.5cm, width=\axisdefaultwidth,
	major tick length=1pt,
	xmin=-20,
	xmax=20,
	ymin=0,
	ymax=31,
	xlabel={Wavenumber ($10^{-3}/\mathrm{m}$)},
	ylabel={Frequency (Hz)},
	enlarge x limits=0.005,
	enlarge y limits=0.005,
	]
	\addplot graphics[xmin=-20,ymin=0,xmax=20,ymax=31,
	includegraphics={trim=0pt 450pt 0pt 110pt, clip}]{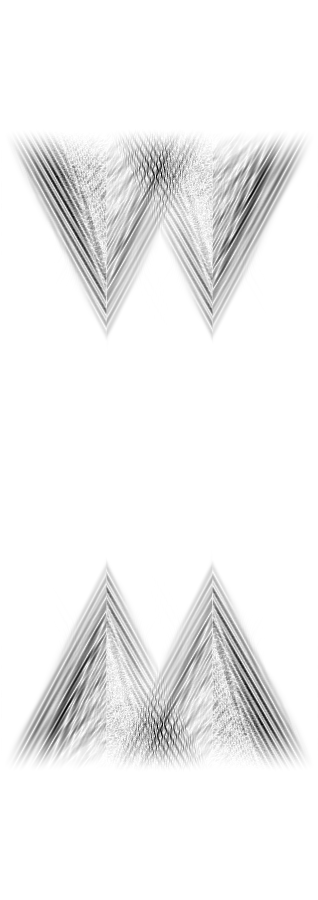};
	\end{axis}
	\end{tikzpicture}
	\caption{\label{plot_FK_app3} Frequency plots ($\omega  {k_x}$) for apparition sampling of three simultaneous sources. Blended data (top left); original source 1 (top right); reconstruction of source 1 (bottom left); reconstruction error for source 1 (bottom right).}
\end{figure*}

	\begin{figure*}
	\centering
	\begin{tikzpicture}[scale=\fsize]
	\begin{axis}[ height=8.5cm, width=\axisdefaultwidth,
	major tick length=1pt,
	xmin=-20,
	xmax=20,
	ymin=0,
	ymax=31,
	xlabel={Wavenumber ($10^{-3}/\mathrm{m}$)},
	ylabel={Frequency (Hz)},
	enlarge x limits=0.005,
	enlarge y limits=0.005,
	]
	\addplot graphics[xmin=-20,ymin=0,xmax=20,ymax=31,
	includegraphics={trim=1pt 450pt 1pt 110pt, clip}]{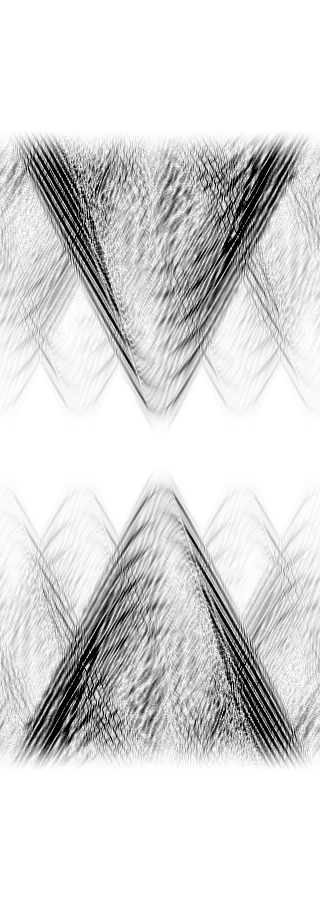};
	\end{axis}
	\end{tikzpicture} 
	\begin{tikzpicture}[scale=\fsize]
	\begin{axis}[height=8.5cm, width=\axisdefaultwidth,
	major tick length=1pt,
	xmin=-20,
	xmax=20,
	ymin=0,
	ymax=31,
	xlabel={Wavenumber ($10^{-3}/\mathrm{m}$)},
	ylabel={Frequency (Hz)},
	enlarge x limits=0.005,
	enlarge y limits=0.005,
	]
	\addplot graphics[xmin=-20,ymin=0,xmax=20,ymax=31,
	includegraphics={trim=1pt 450pt 1pt 110pt, clip}]{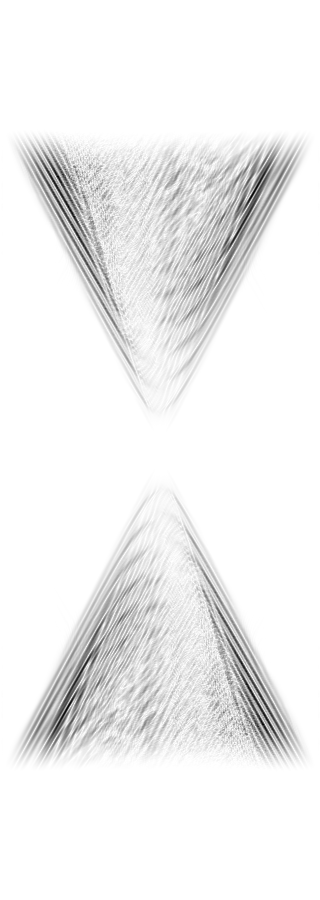};
	\end{axis}
	\end{tikzpicture} 
	\\
	\begin{tikzpicture}[scale=\fsize]
	\begin{axis}[height=8.5cm, width=\axisdefaultwidth,
	major tick length=1pt,
	xmin=-20,
	xmax=20,
	ymin=0,
	ymax=31,
	xlabel={Wavenumber ($10^{-3}/\mathrm{m}$)},
	ylabel={Frequency (Hz)},
	enlarge x limits=0.005,
	enlarge y limits=0.005,
	]
	\addplot graphics[xmin=-20,ymin=0,xmax=20,ymax=31,
	includegraphics={trim=1pt 450pt 1pt 110pt, clip}]{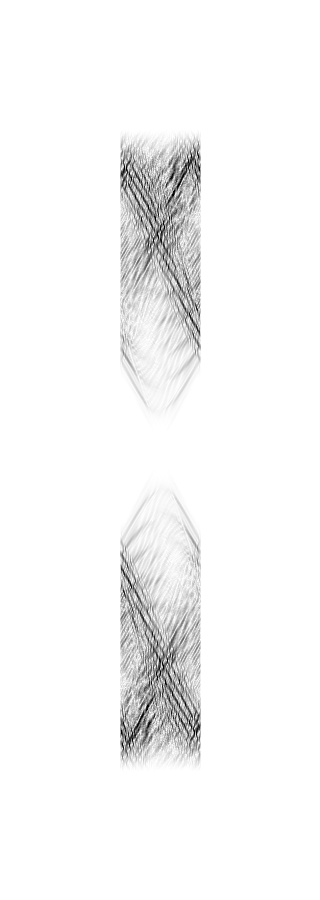};
	\end{axis}
	\end{tikzpicture} 
	\begin{tikzpicture}[scale=\fsize]
	\begin{axis}[height=8.5cm, width=\axisdefaultwidth,
	major tick length=1pt,
	xmin=-20,
	xmax=20,
	ymin=0,
	ymax=31,
	xlabel={Wavenumber ($10^{-3}/\mathrm{m}$)},
	ylabel={Frequency (Hz)},
	enlarge x limits=0.005,
	enlarge y limits=0.005,
	]
	\addplot graphics[xmin=-20,ymin=0,xmax=20,ymax=31,
	includegraphics={trim=1pt 450pt 1pt 110pt, clip}]{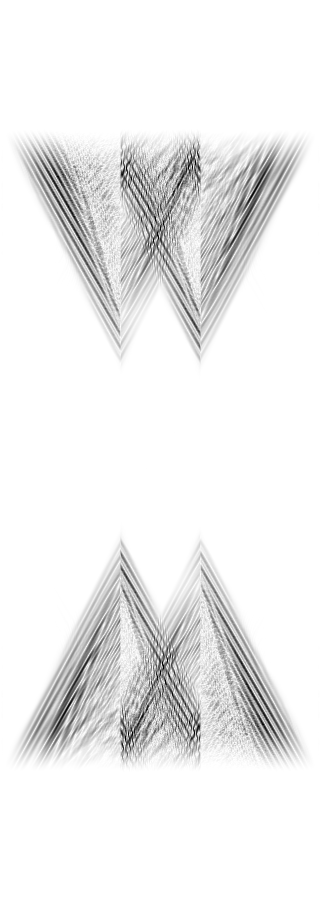};
	\end{axis}
	\end{tikzpicture}
	\caption{\label{plot_FK_app4} Frequency plots ($\omega  {k_x}$) for apparition sampling of four simultaneous sources.  Blended data (top left); original source 1 (top right); reconstruction of source 1 (bottom left); reconstruction error for source 1 (bottom right).}
\end{figure*}

We tested the method on a synthetic data set generated using an acoustic 3D finite-difference solver and a model based on salt-structures in the sub-surface and a free-surface bounding the top of the water layer.  A common receiver gather located in the middle of the model was simulated using this model in which a vessel acquires shotlines with an inline shot spacing of $25 \, \mathrm{m}$. The source wavelet comprises a Ricker wavelet with a maximum frequency of $30\, \mathrm{Hz}$.  For the apparition setup, a time delay of $\pm 24\, \mathrm{ms}$ has been applied to the data set in both experiments. The results of the experiments are displayed in Figure \ref{plot_FK_app3} ($M=3$) and Figure \ref{plot_FK_app4} ($M=4$). Both figures display blended data (top left), original source 1 (top right), reconstruction of source 1 (bottom left), and reconstruction error (bottom right). In both cases, the results clearly show that the apparition method performs perfectly within the entire diamond. Notice that the size of the diamond shrinks as the number of sources increases as predicted by equation \eqref{omega0}. The widest part of the diamond is located at $10\, \mathrm{Hz}$ ($M=3$) and at $7.5\, \mathrm{Hz}$ ($M=4$), respectively.
	
In these experiments we have only attempted to separate the signals inside the optimal diamond-shaped region $\mathcal D$ since this is the focus of the paper. Outside this region, reconstruction is often done by incorporating additional reconstruction constraints, i.e., using regularization or sparseness. There are many ways to incorporate the constraints and the reconstruction quality will depend on the data. 
However, we stress that inside $\mathcal D$, a) all the information can be recovered directly without using additional constraints, and b) the reconstruction quality will not be data dependent and the apparition method provides exact results. Furthermore, these exact results are not sensitive to random noise appearing in the periodic time-shifts \cite{wittsten2018perturbations}, \cite{wittsten2018stability}. The exact data recovered from $\mathcal D$ can subsequently be used to recover the remaining parts of the data by more elaborate de-aliasing methods, see for example Andersson et al.~\cite{andersson2016seismic}, \cite{andersson2017analytic}, \cite{andersson2017quaternion}.

\subsection{Field test}

A triple source apparition field test was carried out in the field during the summer of 2017.

A shotline with 990 shotpoints spaced $12.5\,\mathrm{m}$ apart was acquired over OBS recording stations located inline with the shotline.  The source vessel had six gunstrings which were configured into three sources spaced $30\,\mathrm{m}$ apart (center to center) with two gun strings each.  First a reference line where only the center source was fired at the desired shot locations was acquired.  The triple source line was then acquired with all three sources firing (in an apparition-style encoding pattern) and where the center source corresponds to the shot location of the reference line.  The three sources were encoded using periodic time delay sequences of $[10\ 20\ 0]\,\mathrm{ms}$ for source 1, $[10\ 10\ 10]\,\mathrm{ms}$ for source 2 and $[0\ 20\ 10]\,\mathrm{ms}$ for source 3.

Figure \ref{plot_fieldtest}a shows a common receiver gather of the triple-source data before decoding in the frequency-wavenumber domain.  The triple-source apparition encoding results in three partially overlapping cones where different mixes of the three sources are present.  These can be separated exactly (in the absence of noise and perturbations) inside the diamond-shaped regions whose geometry and optimal size are consistent with the theory presented in this paper.  The notch visible at roughly $15\,\mathrm{Hz}$ corresponds to multiples consistent with the water depth in the area.

In Figure \ref{plot_fieldtest}b we show the result after decoding Figure \ref{plot_fieldtest}a to extract the data corresponding to source 2.  As in \S\ref{ss:num}, we have focused on the diamond-shaped region only and not attempted to separate the response from the three sources outside the diamonds as this is outside the scope of the current paper.  As discussed above, an approach such as the one described by Andersson et al.~\cite{andersson2016seismic}, \cite{andersson2017analytic}, \cite{andersson2017quaternion} can be used to also decode this part of the recorded data.

Figure \ref{plot_fieldtest}c shows the same common receiver gather for the reference line.  Finally, in Figure \ref{plot_fieldtest}d we show the difference between the decoded data (Figure \ref{plot_fieldtest}b) and the reference line (Figure \ref{plot_fieldtest}c).  The outline of the diamond largely void of signal in the difference plot is clearly visible.  Bearing in mind that the triple-source and the reference lines were acquired at two different occasions with different sea and tidal states and difference in noise, source positions and other perturbations and that none of these effects have been corrected for, this demonstrates the robustness of apparition decoding.

In contrast we include the difference plots between the reference line and the decoded source 1 and source 3 in Figures \ref{plot_fieldtestcomp}a and \ref{plot_fieldtestcomp}b.  The locations of source 1 and source 3 are offset by $30\,\mathrm{m}$ in either crossline direction compared to the source locations along the reference line resulting in the large differences inside the diamond-shaped regions (compare Figures \ref{plot_fieldtestcomp}a and \ref{plot_fieldtestcomp}b with Figure \ref{plot_fieldtest}d). Finally, the same comparison in the inline direction is displayed in Figures \ref{plot_fieldtestcomp}c and \ref{plot_fieldtestcomp}d, which show the difference plots between the decoded common receiver gather 464 of source 2 (visible in Figure \ref{plot_fieldtest}b) and the reference line at adjacent receiver locations. The receiver locations of the reference line are offset $50\,\mathrm{m}$ on the sea bed in either inline direction, again resulting in large differences inside the diamond-shaped regions (compare Figures \ref{plot_fieldtestcomp}c and \ref{plot_fieldtestcomp}d with Figure \ref{plot_fieldtest}d).

\begin{figure}
	\centering
	\includegraphics[width=0.49\linewidth]{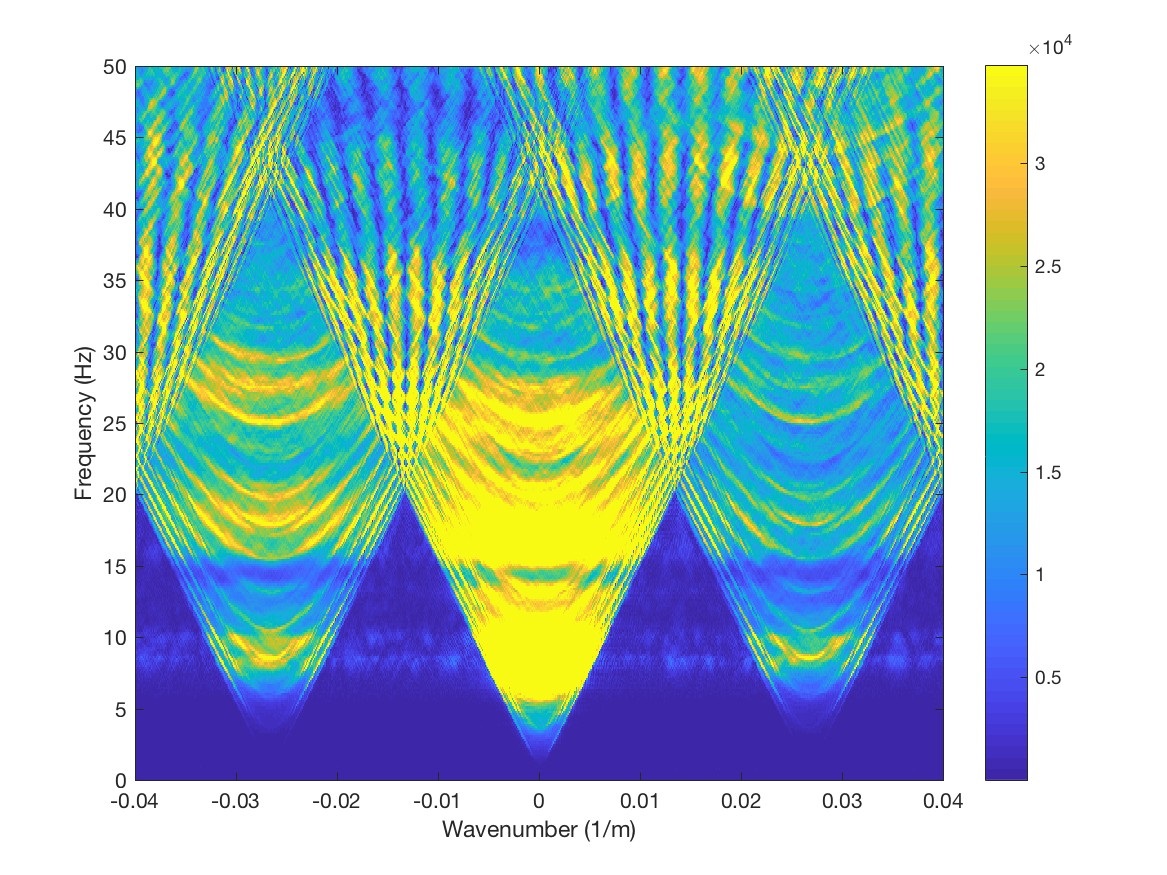}
		\includegraphics[width=0.49\linewidth]{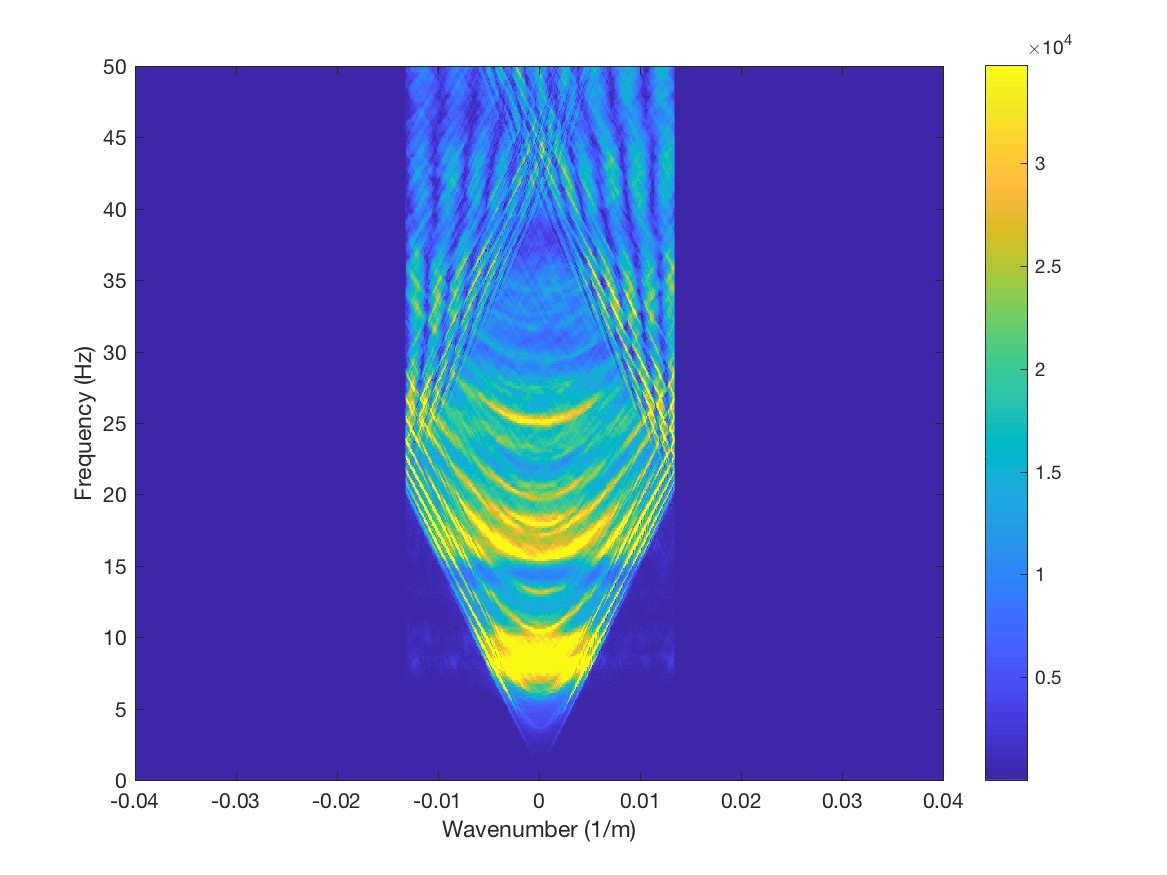}
		\\
			\includegraphics[width=0.49\linewidth]{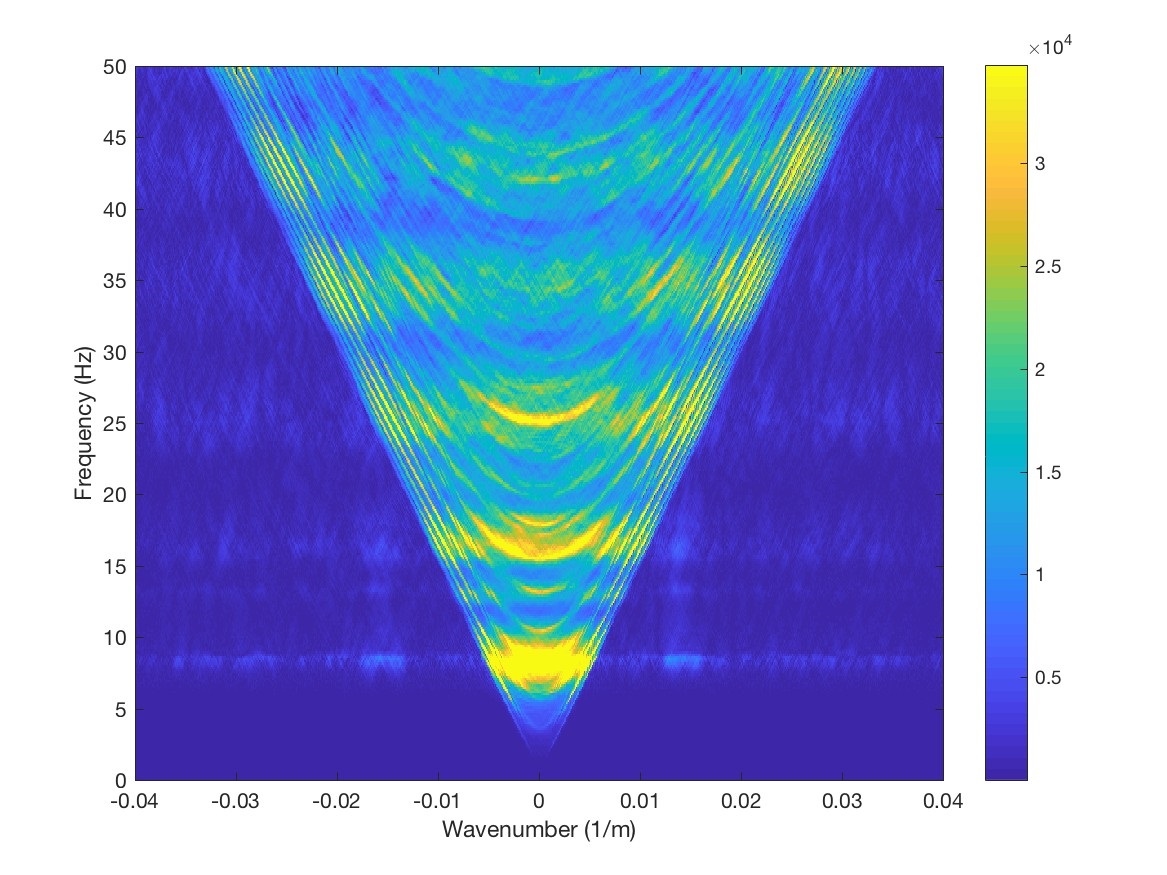}
	\includegraphics[width=0.49\linewidth]{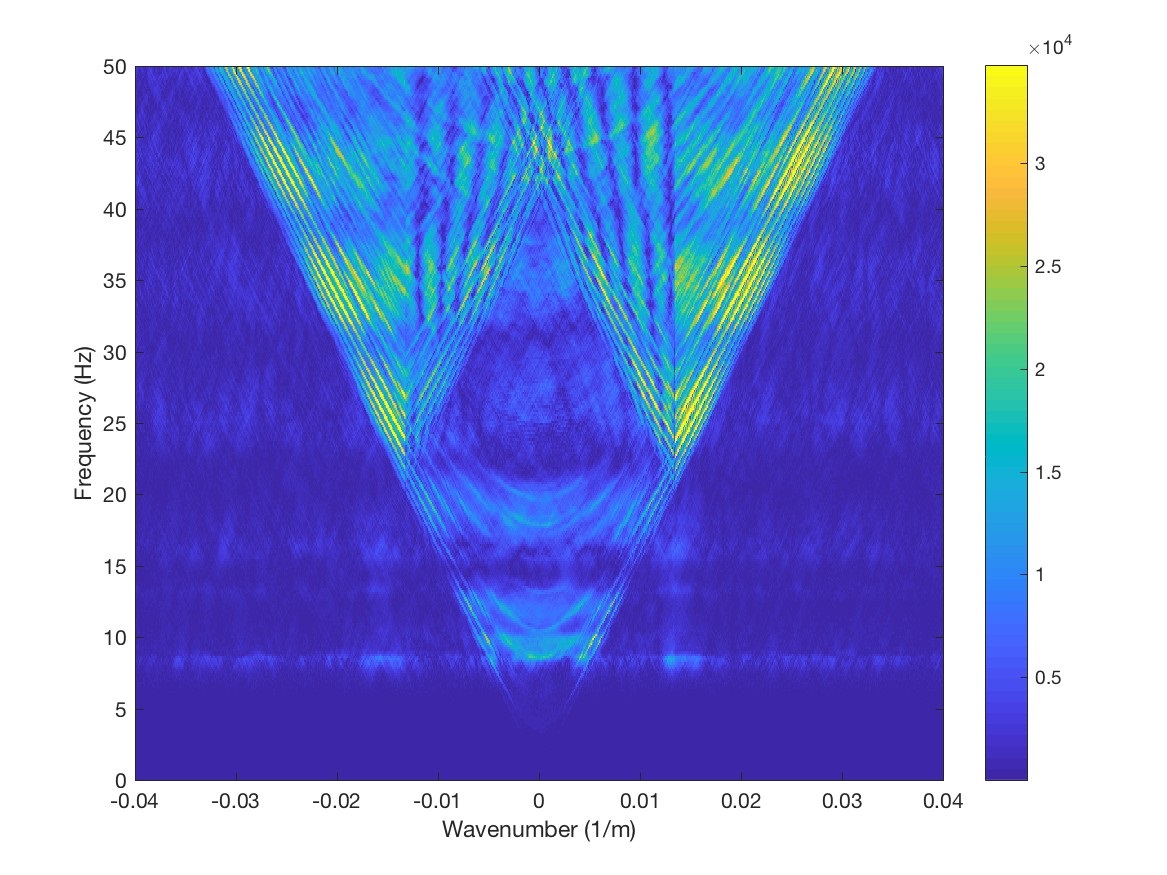}
	\caption{\label{plot_fieldtest} Real data example. Top left (a): Triple-source encoded common receiver gather. Top right (b): Decoded source 2.  Bottom left (c): Reference line corresponding to source 2.  Bottom right (d): Difference between reference line and source 2. }
\end{figure}

\begin{figure}
	\centering
			\includegraphics[width=0.49\linewidth]{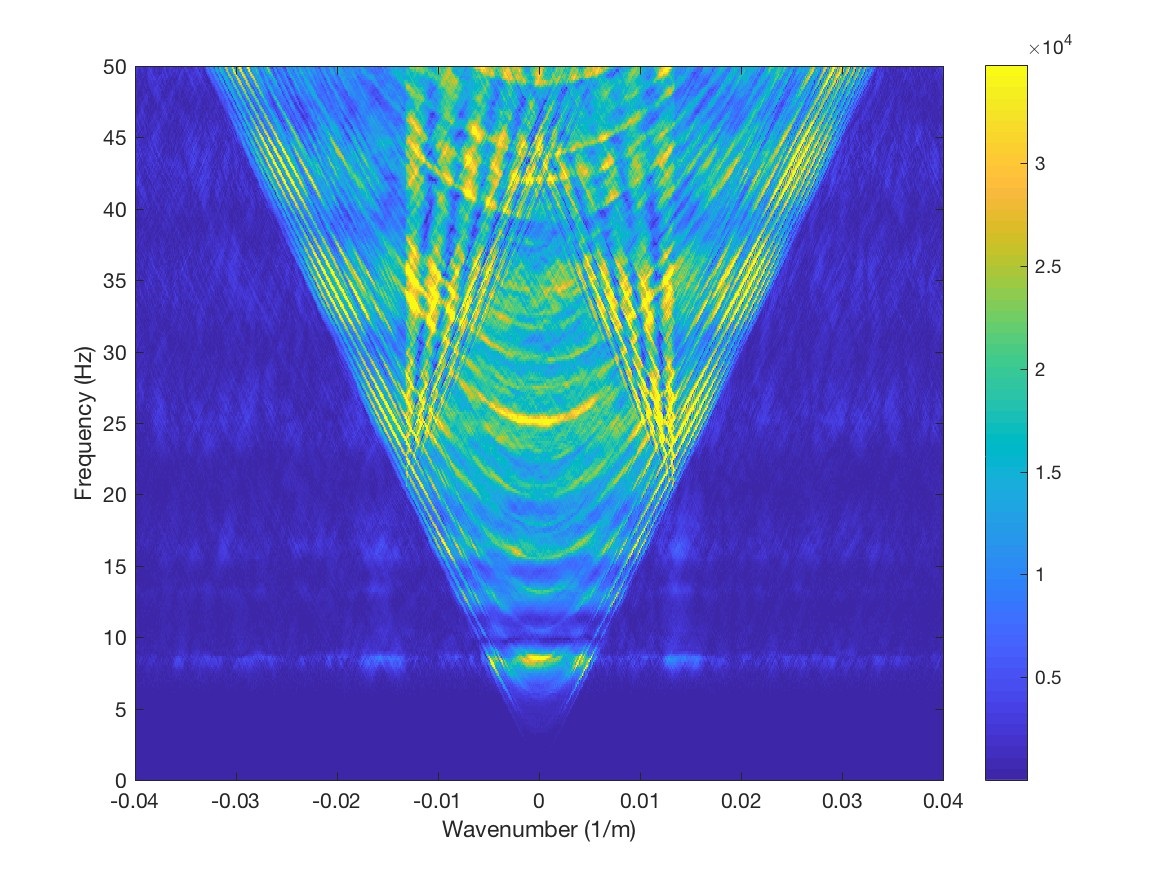}
	\includegraphics[width=0.49\linewidth]{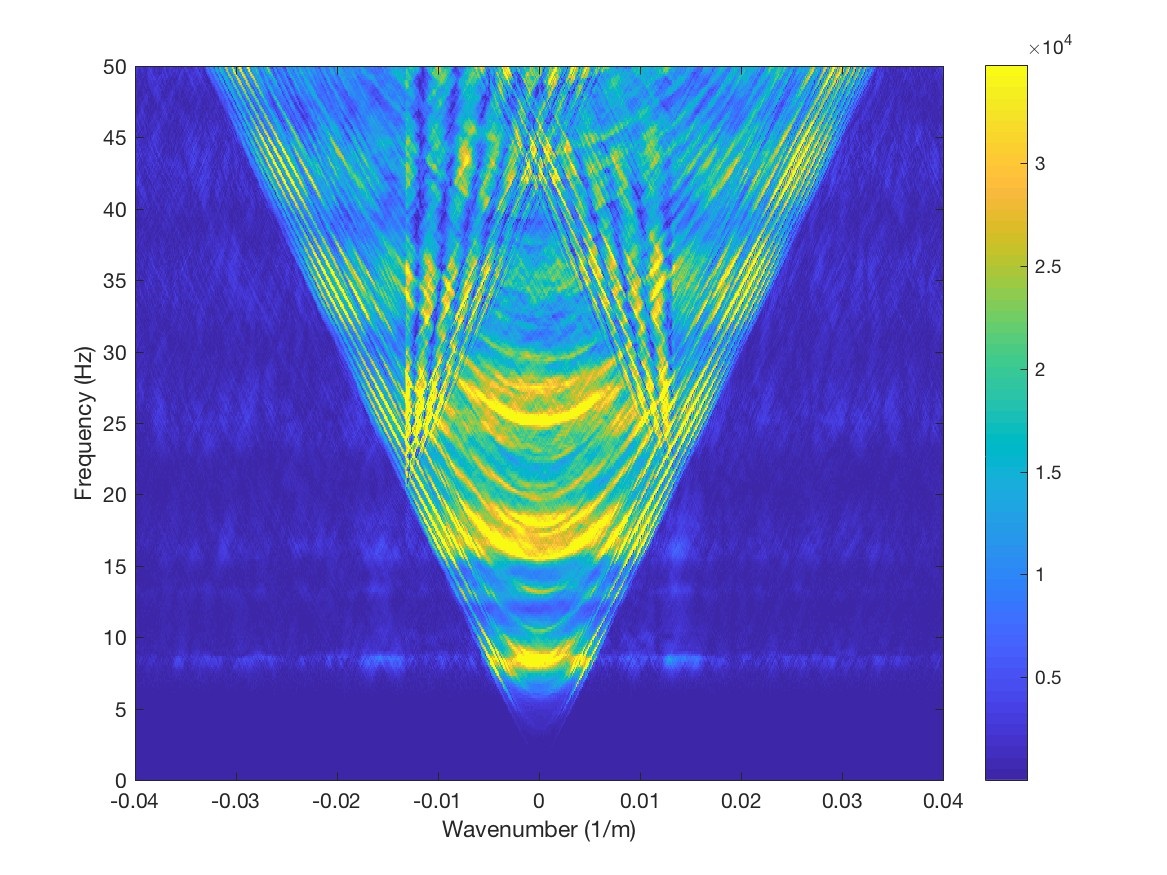}
	\\
			\includegraphics[width=0.49\linewidth]{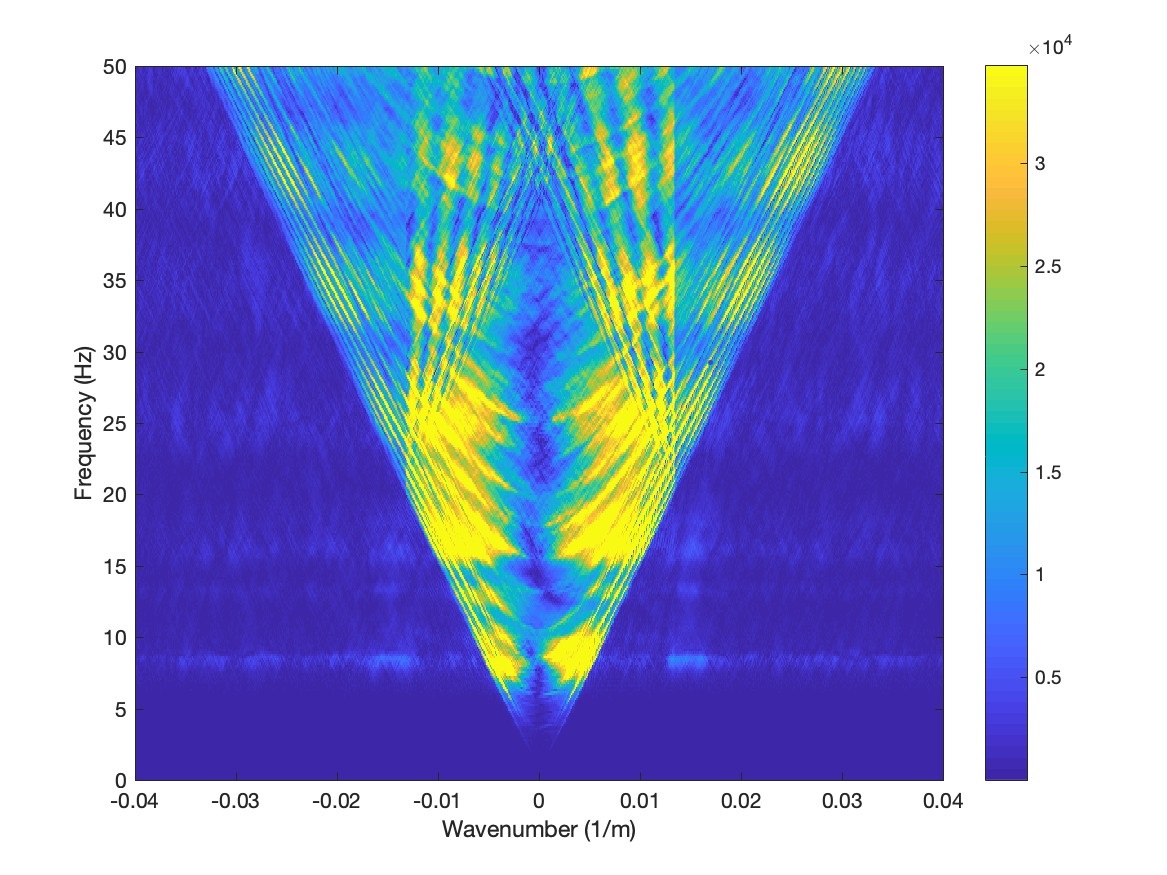}
	\includegraphics[width=0.49\linewidth]{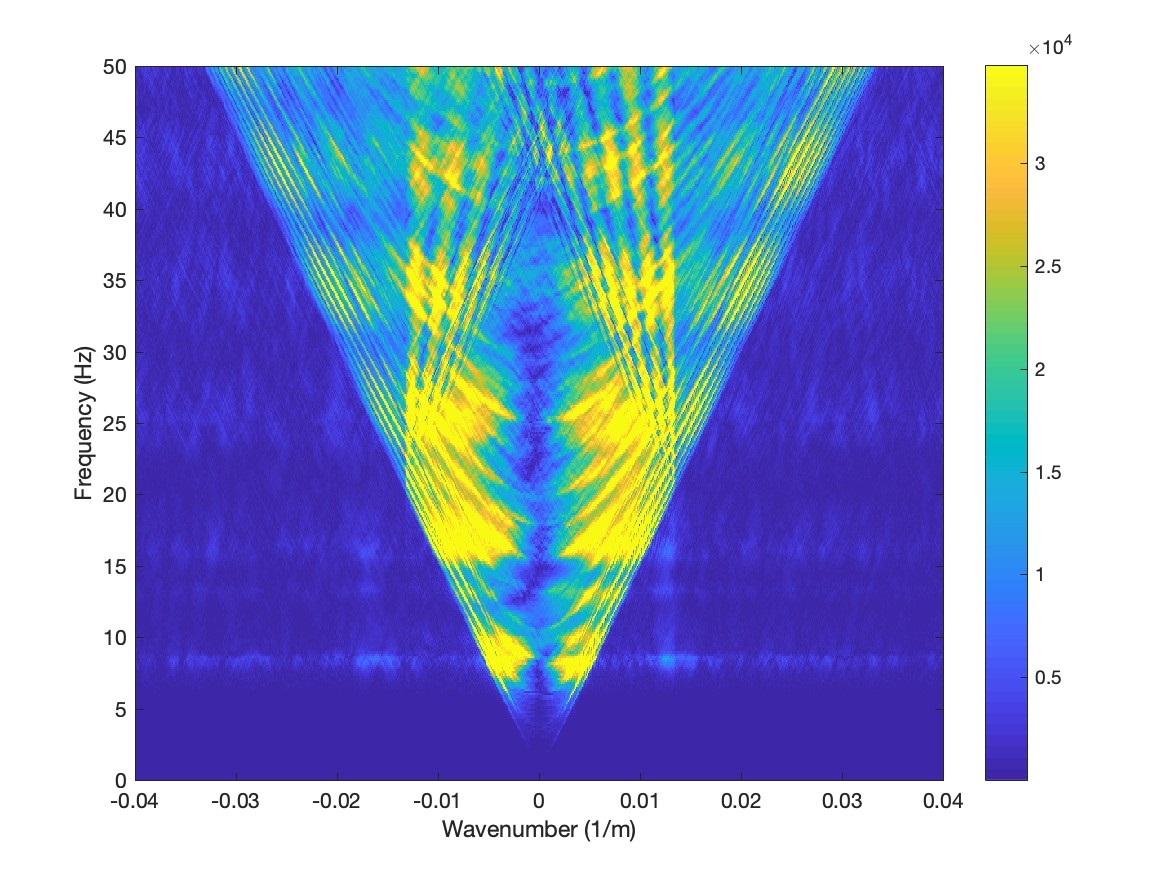}
	\caption{\label{plot_fieldtestcomp} Comparison difference plots. Top left (a): Difference between reference line and source 1.  Top right (b): Difference between reference line and source 3.  Bottom left (c): Difference between crg 464 of source 2 and crg 463 of reference line. Bottom right (d): Difference between crg 464 of source 2 and crg 465 of reference line. }
\end{figure}

\section{Discussion and conclusions}

In this paper we have presented a formal proof showing that encoding $\sou$ simultaneously excited sources in a seismic survey using the method of signal apparition results in optimally large regions in the frequency-wavenumber space for exact separation of sources.  We also presented a proof that all other methods for source encoding results in regions of exact separation which are smaller than that obtained by encoding the sources using signal apparition.

Through synthetic examples we demonstrated the exact separation of the response from three resp.~four simultaneous sources within the optimally large “flawless diamonds” in the frequency-wavenumber space. The examples show that the size of the diamond shrinks as the number of simultaneous sources increases as predicted by the theoretic results in Section \ref{section:theory}. We also provided a field test for three simultaneous sources. The field test illustrates the implementation of the method in a practical setting, and the results were consistent with the theoretical proofs of the paper. We mention that the time shifts used for three simultaneous sources in the synthetic example and the field test are of different size ($\pm 24\,\mathrm{ms}$ resp.~(the equivalent of) $\pm 10\,\mathrm{ms}$), with excellent results in both cases. Comparing the field test and the synthetic examples we also note that the flawless diamond in the field test is twice as large as the flawless diamond in the synthetic example with three sources. This is due to a more dense spacing between shotpoints (inline shot spacing of $12.5\, \mathrm{m}$ in the field test versus $25\, \mathrm{m}$ in the synthetic example) which is again consistent with the theory in Section \ref{section:theory}.

Encoding of sources using signal apparition also has other advantages. The small time-shifts (10's of $\mathrm{ms}$ for typical exploration surveys) being used enable the acquisition of excellent low-frequency content in the data.  If the sources are also fired in relatively close proximity (e.g., towed by the same vessel), the emitted wavefield will effectively be the same as if only one single source with the combined volume was fired for low frequencies.  In contrast, ``conventional'' simultaneous source acquisition using larger time-dithers common for random dithering methods (100's of $\mathrm{ms}$) often results in compromised data quality at low frequencies \cite{jiang2010analysis}, \cite{abma2012overview}.
Finally, time shifts on the order of 10’s of $\mathrm{ms}$ also result in significantly reduced peak-amplitude in source signatures and reduce output energy in the range $100\text{--}1000\, \mathrm{Hz}$ thus reducing potential negative impact on marine mammals.

\section*{Acknowledgment}
The research of Jens Wittsten was supported by Knut och Alice Wallenbergs Stiftelse.

\appendix

\section{}

In this appendix we establish a framework in which we prove auxiliary results used in the proofs of Theorems \ref{maintheorem} and \ref{maintheorem2}. 
If $\sou $ is the number of sources, and $m$ a positive integer, we shall henceforth assume that any given vector $\mathbf{f}\in\mathbb{R}^{2\sou m}$ is indexed on the range $[-\sou m,\sou m-1]$, so that
\begin{equation*}
\mathbf{f}=(\mathbf{f}(-\sou m),\ldots,\mathbf{f}(-1),\mathbf{f}(0),\mathbf{f}(1),\ldots,\mathbf{f}(\sou m-1)).
\end{equation*}
In other words, a function $f:[-\sou m,\sou m-1]\to \mathbb{R}$ is identified with its range, which is a vector in $\mathbb{R}^{2\sou m}$ denoted by $\mathbf{f}$. We let $\mathbf{e}_{-\sou m},\ldots \mathbf{e}_{\sou m-1}$ be the canonical basis of $\mathbb{R}^{2\sou m}$.

Throughout the appendix we let $l$ denote an integer $0\le l\le m-1$, and $\Jg$ some subset of integers
$\Jg\subset[-m-l,m+l]$. Define $\Jr$ by 
\begin{equation}\label{Jbad}
[-m-l,m+l]= \Jg\bigcup\Jr,
\end{equation}
and let $\Ng$ and $\Nr$ denote the cardinality (number of elements) of $\Jg$ and $\Jr$, respectively.
We then make the following standing assumption.
\begin{assumption}
The vectors $\mathbf{w}_2,\ldots,\mathbf{w}_\sou \in\mathbb{R}^{2\sou m}$ have the property that for any $\mathbf{g}_n \in \mathbb{R}^{2\sou m}$, $1\le n \le \sou $, with support contained in $[-m-l,m+l]$, it holds that the values $\mathbf{g}_n(j)$ for $j\in \Jg$ are uniquely determined by
\begin{equation*}
\mathbf{b} = \mathbf{g}_1+\mathbf{g}_2 \ast \mathbf{w}_2+\ldots+\mathbf{g}_\sou \ast \mathbf{w}_\sou 
\end{equation*}
for $1\le n \le \sou $. 
\end{assumption}

Note that for vectors $\mathbf{g}_n$ as in the statement of Assumption A, we may write
\begin{equation*}
\mathbf{g}_n\ast \mathbf{w}_n(k)=\sum_{j=-m-l}^{m+l}\mathbf{g}_n(j)\mathbf{w}_n(k-j).
\end{equation*}
We identify each such $\mathbf{g}_n$ with a vector in $\R^{2m+2l+1}$ and introduce a linear map 
\begin{equation*}
F:\mathbb{R}^{2m+2l+1}\times \ldots \times \mathbb{R}^{2m+2l+1}\to \mathbb{R}^{2\sou m}
\end{equation*}
given by	$F(\mathbf{g}_1,\ldots,\mathbf{g}_\sou )=\mathbf{b}$. Let $U$ be the linear subspace of the domain of $F$ describing the values $\mathbf{g}_n(j)$ for $j\in \Jg$ and $1\le n\le \sou $, i.e., the values which are uniquely determined by $\mathbf{b}$. 
Using the identification above we permit us to let $\mathbf{e}_{-m-l},\ldots,\mathbf{e}_{m+l}$ also denote the canonical basis of $\mathbb{R}^{2m+2l+1}$. 
For notational purposes, write $\mathbf{w}_1=\mathbf{e}_0$, so that $\mathbf{b}=\sum \mathbf{g}_n\ast \mathbf{w}_n$. 
By \cite[Lemma A.1]{andersson2017flawless} there exists a linear subspace $V=F(U)$ of $\mathbb{R}^{2\sou m}$ with
\begin{equation}\label{dim}
\dim(V)=\dim(U)=\Ng\cdot\sou 
\end{equation}
such that if $(\mathbf{g}_1,\ldots,\mathbf{g}_\sou )\in U^\perp$ then $\mathcal{P}_V(\sum \mathbf{g}_n\ast \mathbf{w}_n)=0$, where $\mathcal{P}_V$ is the orthogonal projection onto $V$.
Then all the $\sou$-tuples $(0,\ldots,0,\mathbf{e}_k,0,\ldots,0)$ with $\mathbf{e}_k$ at position $i$ belong to $U^\perp$ for $k\in J_\mathrm{bad}$ and $1\le i\le \sou $.
Likewise, the $\sou$-tuples $(0,\ldots,0,\mathbf{e}_k,0,\ldots,0)$ with $\mathbf{e}_k$ at position $i$ belong to $U$ for $k\in\Jg$ and $1\le i\le \sou $. 
This implies that 
\begin{itemize}\label{bullets}
\item $\mathbf{w}_n\ast\mathbf{e}_k\in V$ for each $n$ if $k\in\Jg$, 
\item $\mathbf{w}_n\ast\mathbf{e}_k\in V^\perp$ for each $n$ if $k\in\Jr$, and
\item $V$ is spanned by $\{\mathbf{w}_n\ast\mathbf{e}_k: k\in\Jg,\ 1\le n\le\sou\}$.
\end{itemize}
Moreover, for $1\le n_1,n_2\le \sou $ we have
\begin{equation}\label{perpagain}
\langle \mathbf{w}_{n_1}\ast \mathbf{e}_{k_1},\mathbf{w}_{n_2}\ast \mathbf{e}_{k_2}\rangle=0,\quad k_1\in\Jg,\ k_2\in\Jr.
\end{equation}

Finally, let $\mathbf{v}_1,\ldots,\mathbf{v}_\sou$ be an orthonormal basis of $\spann \{\mathbf{w}_1,\ldots,\mathbf{w}_\sou\}$ obtained from $\mathbf{w}_1,\ldots,\mathbf{w}_\sou$ by a Gram-Schmidt procedure, with $\mathbf{v}_1=\mathbf{w}_1=\mathbf{e}_0$. It is straightforward to check that 
\begin{equation}\label{span}
V=\spann\{\mathbf{v}_{n}\ast \mathbf{e}_{k}:k\in\Jg,\ 1\le n\le\sou\}.
\end{equation}
Indeed, since $\dim(V)=\Ng\cdot \sou$ it suffices to show that the set is linearly independent, but this is immediate consequence of the third bullet above.
Also, using \eqref{perpagain} it is easy to see that
\begin{equation}\label{perp2again}
\langle \mathbf{v}_{n_1}\ast \mathbf{e}_{k_1},\mathbf{v}_{n_2}\ast \mathbf{e}_{k_2}\rangle=0,\quad k_1\in\Jg,\ k_2\in\Jr
\end{equation}
for each $1\le n_1,n_2\le \sou$.

The following lemma is used in Theorem \ref{maintheorem} to show that only a signal apparition style sampling allows for perfect reconstruction in the diamond-shaped set $\mathcal{D}$.

\begin{lemma} \label{lemma1}
Let $l=m-1$ and suppose Assumption A holds. If $\Jg=\{0\}$, 
then $\supp(\mathbf{w}_n)$, $1\le n\le \sou$, is contained in the discrete set
\begin{align*}
 &\{-\sou m,-(\sou -2)m,\ldots, (\sou -2)m\} &&\text{if $\sou $ is even,}\\
 &\{-(\sou -1)m,-(\sou -3)m,\ldots, (\sou -1)m\} &&\text{if $\sou $ is odd}.
\end{align*}
In other words, $\mathbf{w}_n(k)=0$ unless $k=2ml'$ for some integer $l'$.
\end{lemma}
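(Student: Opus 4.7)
The plan is to repackage \eqref{perpagain} as orthogonality among the cyclic shifts
$V_k := \spann\{\mathbf{w}_n \ast \mathbf{e}_k : 1 \le n \le \sou\}$
of the target space $V = V_0 = \spann\{\mathbf{w}_1, \ldots, \mathbf{w}_\sou\}$, and then exploit the fact that $\mathbf{w}_1 = \mathbf{e}_0$ lies in $V$ to pin $V$ down as the span of the canonical basis vectors indexed by multiples of $2m$. With $l = m-1$ and $\Jg = \{0\}$ one has $\Jr = [-(2m-1), 2m-1] \setminus \{0\}$, so \eqref{perpagain} reads $V_0 \perp V_k$ for $k \in \Jr$; because cyclic shift is an isometry, this promotes to $V_k \perp V_{k'}$ whenever $k - k' \in \Jr$.

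First I would observe that any two distinct elements of $\{0, 1, \ldots, 2m-1\}$ differ by an element of $\Jr$, so the subspaces $V_0, V_1, \ldots, V_{2m-1}$ are pairwise orthogonal. Each has dimension $\sou$ (the formula \eqref{dim} gives $\dim V = \sou$ when $\Ng = 1$, and cyclic shift preserves dimension), and their dimensions sum to $2m\sou = 2\sou m$. Hence $\mathbb{R}^{2\sou m} = V_0 \oplus V_1 \oplus \cdots \oplus V_{2m-1}$ as an orthogonal direct sum, exhausting the ambient space.

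Next, the same criterion yields $V_{2m} \perp V_k$ for $k = 1, \ldots, 2m - 1$, since $k - 2m \in \{-(2m-1), \ldots, -1\} \subset \Jr$. Consequently $V_{2m} \subset (V_1 \oplus \cdots \oplus V_{2m-1})^\perp = V_0 = V$, and matching dimensions forces $V_{2m} = V$. Iterating, $V$ is stable under cyclic shifts by every multiple of $2m$. Since $\mathbf{w}_1 = \mathbf{e}_0 \in V$, shift-invariance then gives $\mathbf{e}_{2ml'} = \mathbf{e}_0 \ast \mathbf{e}_{2ml'} \in V$ for every $l' \in \mathbb{Z}/\sou\mathbb{Z}$. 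These are $\sou$ linearly independent canonical basis vectors inside the $\sou$-dimensional space $V$, so $V$ is exactly their span; every $\mathbf{w}_n \in V$ therefore vanishes off the multiples of $2m$, and translating to integer representatives in $[-\sou m, \sou m - 1]$ yields the even/odd lists in the statement.

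The main step to get right is the dimension equality $2m \cdot \sou = 2\sou m$ that forces $V_0, \ldots, V_{2m-1}$ to exhaust the whole ambient space; this is what turns the local orthogonality coming from \eqref{perpagain} into a global shift-invariance statement for $V$. Everything downstream — the invariance of $V$ under shifts by $2m$ and the concluding identification of $V$ via $\mathbf{e}_0$ — follows very cleanly and does not require the Gram--Schmidt refinement \eqref{span}, \eqref{perp2again}, which is presumably needed only for Theorem \ref{maintheorem2}.
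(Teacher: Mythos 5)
Your proof is correct, and it takes a genuinely different route from the paper's. You work directly with the $\mathbf{w}_n$ and the shifted subspaces $V_k=\spann\{\mathbf{w}_n\ast\mathbf{e}_k:1\le n\le\sou\}$: with $\Jg=\{0\}$ the relation \eqref{perpagain} indeed upgrades (by shift-invariance of the inner product, all indices read modulo $2\sou m$) to $V_k\perp V_{k'}$ whenever $k-k'\in\Jr$, and since $\dim V_k=\Ng\cdot\sou=\sou$ by \eqref{dim}, the $2m$ pairwise orthogonal subspaces $V_0,\ldots,V_{2m-1}$ have total dimension $2\sou m$ and exhaust the ambient space; this makes $(V_1\oplus\cdots\oplus V_{2m-1})^\perp=V_0$, so $V_{2m}\subset V_0$ and hence $V_{2m}=V_0$ by dimension, giving invariance of $V$ under shifts by multiples of $2m$, and then $\mathbf{w}_1=\mathbf{e}_0\in V$ forces $V=\spann\{\mathbf{e}_{2ml'}\}$, from which the support statement follows because each $\mathbf{w}_n\in V$ (third bullet with $\Jg=\{0\}$). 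The paper instead passes to the Gram--Schmidt vectors $\mathbf{v}_n$, shows coordinatewise that $\mathbf{v}_n(j)=0$ on $[-2m+1,2m-1]$ for $n\ge2$, exhibits $\{\mathbf{v}_n\ast\mathbf{e}_k:1\le k\le 2m-1\}$ as an explicit basis of $V^\perp$, and proves $\mathbf{e}_{2m}\in V$ through the explicit decomposition $\mathbf{e}_{2m}=\mathbf{a}_1+\mathbf{a}_2$, iterating afterwards. Both arguments consume exactly the same inputs, namely \eqref{dim} and \eqref{perpagain}; yours trades the coordinate computations and the orthonormalization \eqref{span}, \eqref{perp2again} for a single global orthogonal tiling of $\R^{2\sou m}$, which is cleaner and, as you note, shows that Gram--Schmidt is not logically needed for this lemma. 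What the paper's more hands-on construction buys is reusability: the explicit basis of $V^\perp$ and the $\mathbf{e}_{2m}\in V$ computation are the template repeated almost verbatim (with $\Jg=\{j_0\}$) in the proof of Lemma \ref{lemma:Disbiggest}, where the shifted copies of $V$ no longer tile the space so symmetrically and your exact dimension count would need to be reworked.
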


\begin{proof}
Clearly, it suffices to prove that each $\mathbf{v}_n$ has the desired support described in the statement of the lemma.
With $l=m-1$ and $\Jg=\{0\}$, the discussion above implies that $\mathbf{v}_n\in V$ for each $n$, that $V$ is spanned by $\{\mathbf{v}_n\}_{n=1}^\sou $, and that 
\begin{equation}\label{perp2}
\langle \mathbf{v}_{n_1}\ast\mathbf{e}_k,\mathbf{v}_{n_2}\rangle=0,\quad k=\pm 1,\ldots,\pm (2m-1)
\end{equation}
for each $1\le n_1,n_2\le \sou$. In particular we can take $n_1=n$, $n_2=1$. Since we also have $\langle \mathbf{v}_n,\mathbf{v}_1\rangle=0$ by orthogonality, this means that
\begin{equation}\label{identically0}
\mathbf{v}_n(j)=0\quad\text{for $j\in [-2m+1,2m-1]$ and $n=2,\ldots,\sou$.} 
\end{equation}

We now claim that the collection 
\begin{equation}\label{basis}
\{\mathbf{v}_n\ast \mathbf{e}_k: n=1,\ldots, \sou,\ k=1,\ldots,2m-1\}
\end{equation}
is a basis for $V^\perp$. Indeed, the cardinality of the set is $\sou (2m-1)=\dim(V^\perp)$ so the claim follows if we prove that the set is linearly independent. Arguing by contradiction, suppose that the set not linearly independent. Then for some indices $i,j$ and constants $c_{nk}$ we have 
\begin{equation*}
\mathbf{v}_i\ast\mathbf{e}_j=\sum_{(n,k)\ne(i,j)}c_{nk}\mathbf{v}_n\ast\mathbf{e}_k
\end{equation*}
with the convention that the sum is taken over $1\le n\le \sou$, $1\le k\le 2m-1$. Convolving both sides with $\mathbf{e}_{-j}$, taking the scalar product with $\mathbf{v}_i$ and using \eqref{perp2} we get
\begin{equation*}
1=\langle \mathbf{v}_i,\mathbf{v}_i\rangle=\sum_{(n,k)\ne(i,j)}c_{nk}\langle \mathbf{v}_i,\mathbf{v}_n\ast\mathbf{e}_{k-j}\rangle=0,
\end{equation*}
a contradiction.

Next, we claim that $\mathbf{e}_{2m}\in V$. To see this, recall that $\R^{2\sou m}=V\oplus V^\perp$. By the definition of direct sum, $\mathbf{e}_{2m}$ has a unique representation $\mathbf{e}_{2m}=\mathbf{a}_1+\mathbf{a}_2$ with $\mathbf{a}_1\in V$ and $\mathbf{a}_2\in V^\perp$. In view of \eqref{basis} we can write
\begin{equation*}
\mathbf{a}_2=\sum_{n=1}^\sou \sum_{k=1}^{2m-1}c_{nk}\mathbf{v}_n\ast\mathbf{e}_k.
\end{equation*}
But $\mathbf{v}_n(j)=0$ for $j=1,\ldots,2m-1$ by \eqref{identically0}, so
\begin{equation*}
\mathbf{a}_2(2m)=\sum_{n=1}^\sou \sum_{k=1}^{2m-1}c_{nk}\mathbf{v}_n(2m-k)=0.
\end{equation*}
Hence, $\mathbf{a}_1(2m)=1$, and $\mathbf{a}_1(j)=-\mathbf{a}_2(j)$ for $j\ne 2m$. By orthogonality we have
\begin{equation*}
0=\langle \mathbf{a}_1,\mathbf{a}_2\rangle=-\sum_{j\ne 2m}\mathbf{a}_2(j)^2
\end{equation*}
showing that $\mathbf{a}_2\equiv 0$, which proves the claim.

Since $\mathbf{e}_{2m}\in V$ it follows that each $\mathbf{v}_n$, $2\le n\le \sou$, satisfies
\begin{equation*}
\mathbf{v}_n(2m+k)=\langle \mathbf{v}_n,\mathbf{e}_{2m+k}\rangle=\langle \mathbf{v}_n\ast \mathbf{e}_{-k},\mathbf{e}_{2m}\rangle
=0,\quad k=1,\ldots,2m-1,
\end{equation*}
where the last identity is a consequence of \eqref{perp2}. But this means that we may now repeat the arguments in the preceding paragraph to conclude that $\mathbf{e}_{4m}\in V$, so that $\mathbf{v}_n(4m+k)=0$ for $k=1,\ldots,2m-1$. Iterating we find that $\mathbf{e}_{2ml'} \in V$ for all integers $l'$ and that $\mathbf{v}_n(2ml'+k)=0$ for $k=1,\ldots,2m-1$. Thus each $\mathbf{v}_n$ has the desired support, which completes the proof.
\end{proof}

The next two lemmas are used in Theorem \ref{maintheorem2} to show that the method of signal apparition maximizes the area of the domain in which perfect reconstruction is possible.

\begin{lemma}\label{lemma:cardinality}
Suppose that Assumption A holds.
Then the cardinality $\Ng$ of $\Jg$ is not greater than the cardinality of the discrete set $[-(m-l-1),(m-l-1)]$, i.e., $\Ng\le 2m-2l-1$.
\end{lemma}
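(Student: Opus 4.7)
The plan is to derive $\Ng \le 2m - 2l - 1$ by combining the inclusion $V \subset H$ with the additional vanishing constraints on the Gram--Schmidt basis $\mathbf{v}_n$.

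First, from the second bullet of the preamble applied with $n = 1$ and $\mathbf{w}_1 = \mathbf{e}_0$, one has $\mathbf{e}_k = \mathbf{w}_1 \ast \mathbf{e}_k \in V^\perp$ for every $k \in \Jr$. Hence $V$ is contained in $H := \{\mathbf{x} \in \mathbb{R}^{2Mm} : \mathbf{x}(j) = 0 \text{ for } j \in \Jr\}$. The coarse inequality $\dim V \le \dim H$ combined with $\Ng + \Nr = 2m + 2l + 1$ yields only $\Ng \le 2m - (2l+1)/(M-1)$, which matches the claim for $M = 2$ but falls short for $M \ge 3$.

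To sharpen this, I would orthogonally decompose $V = V_1 \oplus V_2$, where $V_1 := \spann\{\mathbf{e}_k : k \in \Jg\}$ (of dimension $\Ng$) and $V_2 := V \cap W_2$ with $W_2 := \{\mathbf{x} \in \mathbb{R}^{2Mm} : \mathbf{x}(j) = 0, j \in [-m-l, m+l]\}$. Using the compatible splitting $H = V_1 \oplus W_2$, a dimension count gives $\dim V_2 = (M-1)\Ng$, so the claim reduces to $\dim V_2 \le (M-1)(2m - 2l - 1)$. Applying \eqref{perp2again} with $n_1 = 1$ and $\mathbf{v}_1 = \mathbf{e}_0$ forces $\mathbf{v}_n(s) = 0$ for every $s \in (\Jg - \Jr) \cup (\Jr - \Jg)$ and $n \ge 2$, while $\mathbf{v}_n(0) = 0$ follows from orthonormality with $\mathbf{v}_1$. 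A general element of $V_2$ takes the explicit form $\mathbf{x}(j) = \sum_{n \ge 2,\, k \in \Jg} c_{n,k}\, \mathbf{v}_n(j - k)$ on the outer coordinates (the $c_{1,k}$'s being fixed by the requirement that $\mathbf{x}$ vanish on $\Jg$), and the vanishing constraints on the $\mathbf{v}_n$'s should confine this to a subspace of $W_2$ of the desired dimension.

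The main obstacle is the final dimensional count: improving the coarse bound $\dim V_2 \le \dim W_2 = 2m(M-1) - (2l+1)$ to the sharp $(M-1)(2m - 2l - 1)$ requires leveraging the vanishing set $\{0\} \cup (\Jg - \Jr) \cup (\Jr - \Jg)$ of the $\mathbf{v}_n$'s together with their orthonormality in a combinatorially precise way. An alternative strategy is to directly produce $M(2l+1)$ linearly independent vectors in $V^\perp$ via the projections $P_{W_2}(\mathbf{v}_n \ast \mathbf{e}_k)$ for $n \ge 2$ and $k \in \Jr$: their linear dependencies correspond, through the Gram--Schmidt change of basis, to the kernel of the underlying convolution map, which has dimension at least $M(2l+1)$ from the domain--codomain dimension inequality.
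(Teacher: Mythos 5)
Your opening observation is correct and coincides with the paper's own argument for two sources: $\mathbf{e}_k\in V^\perp$ for $k\in\Jr$ gives $V\subset H$ and hence the bound, which is sharp only for $\sou=2$. The trouble is that your refinement for $\sou\ge3$ stops exactly where the work begins. The splitting $V=V_1\oplus V_2$ with $\dim V_2=(\sou-1)\Ng$ is fine, but the target $\dim V_2\le(\sou-1)(2m-2l-1)$ is then \emph{literally equivalent} to the statement $\Ng\le 2m-2l-1$; no reduction has taken place, and the step you yourself flag as the ``main obstacle'' is the entire lemma. It is moreover doubtful that the vanishing constraints $\mathbf{v}_n(s)=0$ for $s\in\{0\}\cup(\Jg-\Jr)\cup(\Jr-\Jg)$ can close it: these are pointwise consequences of \eqref{perp2again} and do not exhaust Assumption A. Compare Lemma \ref{lemma:Disbiggest}, where even in the extreme case $l=m-1$, $\Ng=1$, pinning down supports requires an iterated propagation argument ($\mathbf{e}_{2m+2j_0}\in V$, then further zeros, then repeat), not a one-shot count of zero coordinates.

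Your alternative strategy also points the wrong way. The claim is equivalent to the lower bound $\dim V^\perp=\sou(2m-\Ng)\ge\sou(2l+1)$, i.e.\ to exhibiting \emph{many independent} vectors in $V^\perp$; counting linear dependencies among $\mathbf{e}_k\ast\mathbf{w}_n$, $k\in\Jr$, only bounds their span from above. Indeed, Assumption A forces every element of $\ker F$ to vanish on the $\Jg$-coordinates, so $\ker F$ sits inside the $\sou\Nr$-dimensional space of $\Jr$-supported tuples, and the domain--codomain inequality $\dim\ker F\ge\sou(2l+1)$ then yields only the trivial $\Nr\ge 2l+1$. What is missing is the mechanism the paper actually uses: assume $\Ng>2m-2l-1$, so that $2N_0+1\le\Nr$ with $N_0=\Nr-(2l+1)=\dim V^\perp/\sou$; choose $N_0+1$ consecutive elements $j_1<\dots<j_{N_0+1}$ of $\Jr$ and a shift $j_0$ keeping all $j_k+j_0$ inside $[-m-l,m+l]$ while sending an endpoint into $\Jg$. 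If the $\sou N_0$ vectors $\mathbf{e}_{j_k}\ast\mathbf{w}_n$, $k\le N_0$, are independent they form a basis of $V^\perp$, so $\mathbf{e}_{j_{N_0+1}}\in V^\perp$ is a combination of them; convolving with $\mathbf{e}_{j_0}$ produces two tuples with the same image under $F$ that differ at an index of $\Jg$, contradicting uniqueness. If they are dependent, shifting the dependency so that its top index lands in $\Jg$ gives $F(\mathbf{g})=0$ with $\mathbf{g}$ nonzero at a good index, again contradicting Assumption A. This interplay between translation invariance of the convolution and the uniqueness hypothesis is the key idea absent from your proposal, so as it stands the argument has a genuine gap for $\sou\ge3$.
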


\begin{proof}
We begin by proving the lemma for the case of $\sou=2$ sources where the ideas are easy to convey; the proof for $\sou\ge3$ sources is done in the same spirit but the arguments are more intricate then. Let $\Nr$ be the cardinality of $\Jr$. By the second bullet on page \pageref{bullets} we have $\dim(V^\perp)\ge\Nr$. Since $\Nr=2m+2l+1-\Ng$, this together with \eqref{Jbad} and \eqref{dim} implies that
\begin{equation*}
2m+2l+1-\Ng\le\dim(V^\perp)=4m-\dim(V)=4m-2\Ng.
\end{equation*}
Thus, $\Ng\le 2m-2l-1$, which completes the proof when $\sou=2$.

We now turn to the case of general $\sou\ge2$ where we will use the same ideas, namely that if there are too many elements in $\Jg$ then $\dim(V)$ will as a result be too big, forcing $\dim(V^\perp)$ to be too small. However, since the cardinality of $\Jr$ does not increase with $\sou$ while $\dim(V^\perp)$ does, 
the proof requires more finesse. We stress that it is unknown whether $0\in\Jg$.

First note that
\begin{equation*}
\dim(V^\perp)=2\sou m-\dim(V)=\sou(2m-\Ng)=\sou(\Nr-(2l+1)).
\end{equation*}
Write $N_0=\Nr-(2l+1)$ so that $\dim(V^\perp)=\sou N_0$. Assume to reach a contradiction that $\Ng>2m-2l-1$. Then $\Nr=2m+2l+1-\Ng<2(2l+1)$ so
\begin{equation*}
2N_0+1\le \Nr.
\end{equation*}
It is easy to see that this implies the existence of $N_0+1$ consecutive integers
\begin{equation*}
 j_1<j_2<\ldots<j_{N_0+1},\quad j_k\in\Jr,
\end{equation*}
such that for some $j_0\in\Z$, either $j_1+j_0$ or $j_{N_0+1}+j_0$ belongs to $\Jg$ and all perturbed elements $j_k+j_0$ belong to $[-m-l,m+l]$. We treat the case when $j_{N_0+1}+j_0\in\Jg$, the proof of the other case is similar. Consider the set
\begin{equation*}
Z=\spann\{ \mathbf{e}_{j_k}\ast \mathbf{w}_n:1\le n\le\sou,\ 1\le k\le N_0\}\subset V^\perp.
\end{equation*}
Assume first that $Z$ is linearly independent.
Since the cardinality of $Z$ is equal to $\dim(V^\perp)$, $Z$ then constitutes a basis of $V^\perp$. Hence, there are constants $c_{nk}$ (not all zero) such that
\begin{equation*}
\mathbf{e}_{j_{N_0+1}}=\sum_{n=1}^\sou\sum_{k=1}^{N_0}c_{nk}\mathbf{e}_{j_k}\ast \mathbf{w}_n.
\end{equation*}
After convolution with $\mathbf{e}_{j_0}$ we get
\begin{equation*}
\mathbf{e}_{j_0+j_{N_0+1}}=\sum_{n=1}^\sou\sum_{k=1}^{N_0}c_{nk}\mathbf{e}_{j_0+j_k}\ast \mathbf{w}_n.
\end{equation*}
By assumption, 
all terms on the right satisfy $j_0+j_k\in[-m-l,m+l]$.
But then we can choose $\mathbf{g}_n$ so that 
\begin{equation*}
\mathbf{e}_{j_0+j_{N_0+1}}=\sum_{n=1}^\sou\sum_{k=1}^{N_0}\mathbf{g}_n\ast \mathbf{w}_n=F(\mathbf{g}).
\end{equation*}
The left-hand side is also equal to $F(\mathbf{g}')$ with $\mathbf{g}'=(\mathbf{e}_{j_0+j_{N_0+1}},0,\ldots,0)$, and since $j_0+j_{N_0+1}\in\Jg$, this contradicts Assumption A.

It remains to consider the case when $Z$ is linearly dependent. Then there are constants $c_{nk}$ (not all zero) such that
\begin{equation*}
0=\sum_{n=1}^\sou\sum_{k=1}^{N_0}c_{nk}\mathbf{e}_{j_k}\ast \mathbf{w}_n.
\end{equation*}
Let $q$ be the largest integer such that $1\le q\le N_0$ and at least one $c_{nk}$ is nonzero for $k=q$. Convolving with $\mathbf{e}_{j_0+j_{N_0+1}-j_q}$ we obtain
\begin{equation*}
0=\sum_{n=1}^\sou\sum_{k=1}^{q}c_{nk}\mathbf{e}_{j_k+j_0+j_{N_0+1}-j_q}\ast \mathbf{w}_n.
\end{equation*}
Note that for $1\le k\le q$, each integer $j_k+j_0+j_{N_0+1}-j_q$ satisfies 
\begin{equation*}
-m-l\le j_k+j_0\le j_k+(j_0+j_{N_0+1}-j_q)\le j_0+j_{N_0+1}\le m+l,
\end{equation*}
and when $q=k$ we have $\mathbf{e}_{j_k+j_0+j_{N_0+1}-j_q}=\mathbf{e}_{j_0+j_{N_0+1}}\in\Jg$. But then we can choose $\mathbf{g}_n$ so that
\begin{equation*}
0=\sum_{n=1}^\sou\sum_{k=1}^{q}\mathbf{g}_n\ast \mathbf{w}_n=F(\mathbf{g}),
\end{equation*}
where $\mathcal{P}_V(F(\mathbf{g}))\ne 0$. Clearly, this is a contradiction since the projection  of the left-hand side onto $V$ is 0. This completes the proof.
\end{proof}

\begin{lemma}\label{lemma:Disbiggest}
Let $l=m-1$ and suppose that Assumption A holds. If the cardinality of $\Jg$ is equal to 1, then $\Jg=\{0\}$.
\end{lemma}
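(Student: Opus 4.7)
The plan is to argue by contradiction: suppose $\Jg = \{j_0\}$ with $j_0 \neq 0$, and work to force $\mathbf{v}_2, \ldots, \mathbf{v}_\sou$ into an impossible position. First I would extract the orthogonality \eqref{perp2again} applied both with $(n_1, n_2) = (n, 1)$ and with $(n_1, n_2) = (1, n)$, then combine these via the adjoint identity $\langle \mathbf{u}, \mathbf{v} \ast \mathbf{e}_k \rangle = \langle \mathbf{u} \ast \mathbf{e}_{-k}, \mathbf{v}\rangle$ to obtain
\begin{equation*}
\langle \mathbf{v}_{n_1} \ast \mathbf{e}_k, \mathbf{v}_{n_2}\rangle = 0 \quad \text{for all } k \in [-(2m-1+|j_0|),\, 2m-1+|j_0|] \setminus \{0\}
\end{equation*}
and all $1 \leq n_1, n_2 \leq \sou$. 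Specializing $n_2 = 1$ and invoking $\mathbf{v}_1 = \mathbf{e}_0$ together with $\mathbf{v}_n(0) = \langle \mathbf{v}_n, \mathbf{v}_1\rangle = 0$ for $n \geq 2$, this yields the symmetric support constraint $\mathbf{v}_n(j) = 0$ on the entire interval $[-(2m-1+|j_0|),\, 2m-1+|j_0|]$, for every $n \geq 2$.

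Next I would introduce the shifted orthonormal basis $\tilde{\mathbf{v}}_n := \mathbf{v}_n \ast \mathbf{e}_{j_0}$ of $V$ and argue that the proof of Lemma \ref{lemma1} can be replayed with $\tilde{\mathbf{v}}_n$ in place of $\mathbf{v}_n$ and $2m + j_0$ in place of $2m$. Since $\langle \tilde{\mathbf{v}}_{n_1} \ast \mathbf{e}_k, \tilde{\mathbf{v}}_{n_2}\rangle = \langle \mathbf{v}_{n_1} \ast \mathbf{e}_k, \mathbf{v}_{n_2}\rangle$, the first step supplies exactly the orthogonality on the symmetric range $k = \pm 1, \ldots, \pm(2m-1)$ that the argument needs. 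Using the two bases $\{\tilde{\mathbf{v}}_n \ast \mathbf{e}_k : k \in \{1, \ldots, 2m-1\}\}$ and $\{\tilde{\mathbf{v}}_n \ast \mathbf{e}_k : k \in \{-(2m-1), \ldots, -1\}\}$ for $V^\perp$ to iterate in both directions, the adapted argument produces $\mathbf{e}_{2ml' + j_0} \in V$ for every integer $l'$ with $2ml' + j_0 \in [-\sou m, \sou m - 1]$. Translating via $\tilde{\mathbf{v}}_n(j) = \mathbf{v}_n(j - j_0)$, each $\mathbf{v}_n$ with $n \geq 2$ must then be supported inside the apparition lattice $\{2ml' : l' \in \mathbb{Z}\} \cap [-\sou m, \sou m - 1]$.

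A dimension count then finishes the argument. That apparition lattice contains exactly $\sou$ integers. The origin is inadmissible by $\mathbf{v}_n(0) = 0$, and any of $\pm 2m$ lying in $[-\sou m, \sou m - 1]$ falls inside the first-step zero region, because $|j_0| \geq 1$ forces $2m \leq 2m - 1 + |j_0|$. Removing these leaves at most $\sou - 2$ coordinates available, which cannot support the $\sou - 1$ orthonormal vectors $\mathbf{v}_2, \ldots, \mathbf{v}_\sou$; this contradiction forces $j_0 = 0$.

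The step I expect to be hardest is the second one: it requires carefully checking that each pillar of the proof of Lemma \ref{lemma1}---the linear independence argument establishing a basis of $V^\perp$, the explicit $\mathbf{a}_2$-vanishing computation showing $\mathbf{e}_{2m + j_0} \in V$, and the iteration in both directions---continues to function after $\mathbf{v}_n$ is replaced by $\tilde{\mathbf{v}}_n$ and the center shifted by $j_0$. The crucial observation that should make this go through is that the enlarged range $[-(2m-1+|j_0|),\, 2m-1+|j_0|] \setminus \{0\}$ from the first step strictly contains the range $[-(2m-1), 2m-1] \setminus \{0\}$ invoked by Lemma \ref{lemma1}, so no genuinely new orthogonality has to be established.
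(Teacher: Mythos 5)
Your proposal is correct and follows essentially the same route as the paper's proof: derive the symmetric vanishing \eqref{morezeros} from \eqref{perp2again}, identify the shifted family $\{\mathbf{v}_n\ast\mathbf{e}_{j_0+k}\}$ as a basis of $V^\perp$, force a coordinate vector into $V$ by the $\mathbf{a}_1+\mathbf{a}_2$ decomposition, iterate, and conclude with a support/dimension count against the orthonormal vectors $\mathbf{v}_2,\ldots,\mathbf{v}_\sou$. The only deviation is in bookkeeping: your $j_0$-shifted replay yields $\mathbf{e}_{2m+j_0}\in V$ and confines the $\mathbf{v}_n$ to the lattice $2m\mathbb{Z}$ (in fact $\mathbf{e}_{2m+j_0}\in V$ already contradicts $\mathbf{v}_n(2m)=0$ for the spanning vectors $\mathbf{v}_n\ast\mathbf{e}_{j_0}$, so you could stop there), whereas the paper obtains $\mathbf{e}_{2m+2j_0}\in V$ and the lattice generated by $2m+j_0$; both counts close the argument.
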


\begin{proof}
Let $j_0$ denote the single element in $\Jg$, and assume to reach a contradiction that $j_0\ne0$. By symmetry we may without loss of generality assume that $1\le j_0\le 2m-1$. Using \eqref{perp2again} it is straightforward to check that 
\begin{equation}\label{morezeros}
\mathbf{v}_n(j)=0,\quad 1\le\lvert j\rvert\le j_0+2m-1,\quad n=1,\ldots,\sou.
\end{equation}
(Note that compared to signal apparition, this is a larger range of values where each $\mathbf{v}_n$ vanishes. In particular, $\mathbf{v}_n(\pm 2m)=0$.)
We will now follow the strategy in the proof of Lemma \ref{lemma1} and
\begin{itemize}
\item[i)] determine a basis for $V^\perp$, 
\item[ii)] show that $\mathbf{e}_{2m+2j_0}\in V$, 
\item[iii)] conclude that $\mathbf{v}_n(2m+j_0+k)=0$ for $k=1,\ldots,2m-1$.
\end{itemize}
As we shall see, steps ii) and iii) may then be repeated, and iteration will lead to a contradiction showing that 
$\Jg=\{0\}$.

We begin with i) and claim that
\begin{equation}\label{basisagain}
\{\mathbf{v}_n\ast \mathbf{e}_{j_0+k}: n=1,\ldots, \sou,\ k=1,\ldots,2m-1\}
\end{equation}
is a basis for $V^\perp$. Inspecting the proof of \eqref{basis} in Lemma \ref{lemma1} we see that the same arguments can be repeated verbatim as long as we establish that
\begin{equation}\label{inVperp}
\mathbf{v}_n\ast \mathbf{e}_{j_0+k}\in V^\perp,\quad \text{for } n=1,\ldots, \sou,\ k=1,\ldots,2m-1.
\end{equation}
For $k\ge1$ this is clear as long as $j_0+k\le 2m-1$, but since $j_0\ge1$ the upper range needs to be checked. 
To this end, let $\mathbf{u}\in V$ be arbitrary. Using \eqref{span} with $\Jg=\{j_0\}$ we can write
\begin{equation*}
\langle \mathbf{u}, \mathbf{v}_{n}\ast\mathbf{e}_{j_0+k}\rangle=\sum_{n'} c_{n'}\langle  \mathbf{v}_{n'}\ast \mathbf{e}_{j_0},\mathbf{v}_{n}\ast\mathbf{e}_{j_0+k}\rangle=\sum_{n'} c_{n'}\langle  \mathbf{v}_{n'}\ast \mathbf{e}_{j_0-k},\mathbf{v}_{n}\ast\mathbf{e}_{j_0}\rangle.
\end{equation*}
By \eqref{perp2again} the right-hand side is zero when $j_0-k\in\Jr$, which holds when $k=1,\ldots,2m-1$. Hence, \eqref{inVperp} is valid, so \eqref{basisagain} is a basis for $V^\perp$. (The computation even shows that \eqref{inVperp} is true for $k=1,\ldots,2m+j_0-1$.)

Next, we prove ii). By the definition of direct sum we can write $\mathbf{e}_{2m+2j_0}=\mathbf{a}_1+\mathbf{a}_2$ uniquely with $\mathbf{a}_1\in V$ and $\mathbf{a}_2\in V^\perp$. By \eqref{basisagain} and \eqref{morezeros} we have 
\begin{align*}
\mathbf{a}_2(2m+2j_0)&=\sum_{n=1}^\sou\sum_{k=1}^{2m-1} c_{nk} \mathbf{v}_n\ast \mathbf{e}_{j_0+k}(2m+2j_0)\\&=\sum_{n=1}^\sou \sum_{k=1}^{2m-1} c_{nk} \mathbf{v}_n (2m+j_0-k)=0.
\notag\end{align*}
Hence, $\mathbf{a}_1(2m+2j_0)=1$ and $\mathbf{a}_2(j)=-\mathbf{a}_1(j)$ for all other $j$. But then 
\begin{equation*}
0=\langle \mathbf{a}_1,\mathbf{a}_2\rangle=-\sum_{j\ne 2m+2j_0}\mathbf{a}_2(j)^2
\end{equation*}
by orthogonality, so $\mathbf{a}_2\equiv 0$ which proves ii).

We now turn to iii). Since $\mathbf{e}_{2m+2j_0}\in V$ it follows that each $\mathbf{v}_n$, $2\le n\le \sou$, satisfies
\begin{equation*}
\mathbf{v}_n(2m+j_0+k)=\langle \mathbf{v}_n,\mathbf{e}_{2m+j_0+k}\rangle=\langle \mathbf{v}_n\ast \mathbf{e}_{j_0-k},\mathbf{e}_{2m+2j_0}\rangle=0
\end{equation*}
when $j_0-k\in \Jr$. In particular, 
\begin{equation*}
\mathbf{v}_n(j)=0,\quad 2m+j_0+1\le j\le 4m+2j_0-1,
\end{equation*}
which proves iii). As mentioned above, we may now repeat these arguments and conclude that $\mathbf{e}_{4m+3j_0}\in V$, so that $\mathbf{v}_n(4m+2j_0+k)=0$ for $k=1,\ldots,2m-1$. Iterating we find that $\mathbf{e}_{j} \in V$ when $j=j_0+(2m+j_0)j'$ mod $2\sou m$ for nonnegative integers $j'$ which shows that $\mathbf{v}_n(j)=0$ unless $j=(2m+j_0)j'$ mod $2\sou m$. However, taking \eqref{morezeros} into account, the number of such points in $[-\sou m,\sou m-1]$ is strictly less than $\sou$. This means that each $\mathbf{v}_n$ is a linear combination of fewer than $\sou$ number of elements $\mathbf{e}_{j}$, $j=(2m+j_0)j'$ mod $2\sou m$. Clearly, this contradicts the fact that $\dim(V)=\sou$, which proves that 
$\Jg=\{0\}$.
\end{proof}

\bibliographystyle{amsplain}
\bibliography{referenser}

\providecommand{\bysame}{\leavevmode\hbox to3em{\hrulefill}\thinspace}
\providecommand{\MR}{\relax\ifhmode\unskip\space\fi MR }
\providecommand{\MRhref}[2]{%
  \href{http://www.ams.org/mathscinet-getitem?mr=#1}{#2}
}
\providecommand{\href}[2]{#2}
\begin{thebibliography}{10}

\bibitem{abma2012overview}
Ray Abma, Qie Zhang, Adeyemi Arogunmati, and Gerard Beaudoin, \emph{{An
  overview of BP’s marine independent simultaneous source field trials}}, SEG
  Annual Meeting, 2012, pp.~1--5.

\bibitem{akerberg2008simultaneous}
Peeter Akerberg, Gary Hampson, James Rickett, Harry Martin, Jeff Cole, et~al.,
  \emph{Simultaneous source separation by sparse {R}adon transform}, SEG Annual
  Meeting, 2008, pp.~2801--2805.

\bibitem{amundsen2017multisource}
Lasse Amundsen, Fredrik Andersson, Dirk-Jan van Manen, Johan O.~A. Robertsson,
  and Kurt Eggenberger, \emph{Multisource encoding and decoding using the
  signal apparition technique}, Geophysics \textbf{83} (2018), no.~1, V49--V59.

\bibitem{andersson2017multisource}
Fredrik Andersson, Lasse Amundsen, Dirk-Jan van Manen, Johan O.~A. Robertsson,
  Kurt Eggenberger, et~al., \emph{Multisource seismic apparition}, SEG Annual
  Meeting, 2017.

\bibitem{andersson2016seismic}
Fredrik Andersson, Kurt Eggenberger, Dirk-Jan Van~Manen, Johan O.~A.
  Robertsson, and Lasse Amundsen, \emph{Seismic apparition dealiasing using
  directionality regularization}, SEG Annual Meeting, 2016, pp.~56--60.

\bibitem{andersson2017flawless}
Fredrik Andersson, Johan O.~A. Robertsson, Dirk-Jan van Manen, Jens Wittsten,
  Kurt Eggenberger, and Lasse Amundsen, \emph{Flawless diamond separation in
  simultaneous source acquisition by seismic apparition}, Geophysical Journal
  International \textbf{209} (2017), no.~3, 1735--1739.

\bibitem{andersson2017analytic}
Fredrik Andersson, Dirk-Jan van Manen, Johan O.~A. Robertsson, Jens Wittsten,
  and Kurt Eggenberger, \emph{Analytic dealiasing in seismic apparition}, 79th
  EAGE Annual Meeting, 2017.

\bibitem{andersson2017quaternion}
Fredrik Andersson, Dirk-Jan Van~Manen, Jens Wittsten, Kurt Eggenberger, and
  Johan O.~A. Robertsson, \emph{Quaternion dealising for simultaneous source
  separation}, SEG Annual Meeting, 2017, pp.~4322--4327.

\bibitem{andersson2016deblending}
Fredrik Andersson, Jens Wittsten, Adriana~C. Ramirez, and Torgeir Wiik,
  \emph{Deblending seismic data by directionality penalties}, 78th EAGE Annual
  Meeting, 2016.

\bibitem{beasley1998new}
Craig~J. Beasley, Ronald~E. Chambers, and Zerong Jiang, \emph{A new look at
  simultaneous sources}, SEG Annual Meeting, 1998, pp.~133--135.

\bibitem{berkhout2008changing}
AJ~``Guus'' Berkhout, \emph{Changing the mindset in seismic data acquisition},
  The Leading Edge \textbf{27} (2008), no.~7, 924--938.

\bibitem{chen2014iterative}
Yangkang Chen, Sergey Fomel, and Jingwei Hu, \emph{Iterative deblending of
  simultaneous-source seismic data using seislet-domain shaping
  regularization}, Geophysics \textbf{79} (2014), no.~5, V179--V189.

\bibitem{hampson2008acquisition}
Gary Hampson, Joe Stefani, and Fred Herkenhoff, \emph{Acquisition using
  simultaneous sources}, The Leading Edge \textbf{27} (2008), no.~7, 918--923.

\bibitem{ikelle2010coding}
Luc~T. Ikelle, \emph{Coding and decoding: Seismic data: The concept of
  multishooting}, vol.~39, Elsevier, 2010.

\bibitem{jiang2010analysis}
Zhiyong Jiang and Ray Abma, \emph{An analysis on the simultaneous imaging of
  simultaneous source data}, SEG Technical Program Expanded Abstracts 2010,
  2010, pp.~3115--3119.

\bibitem{langhammer2015triple}
Jan Langhammer and Pete Bennion, \emph{Triple-source simultaneous shooting
  ({TS3}), a future for higher density seismic?}, 77th EAGE Annual Meeting,
  2015.

\bibitem{moore2010simultaneous}
Ian Moore, \emph{Simultaneous sources--processing and applications}, 72nd EAGE
  Annual Meeting, 2010.

\bibitem{mueller2015benefit}
Moritz~B. Mueller, David~F. Halliday, Dirk-Jan van Manen, and Johan O.~A.
  Robertsson, \emph{The benefit of encoded source sequences for simultaneous
  source separation}, Geophysics \textbf{80} (2015), no.~5, V133--V143.

\bibitem{robertsson2016signal}
Johan O.~A. Robertsson, Lasse Amundsen, and {\AA}smund~Sj{\o}en Pedersen,
  \emph{Signal apparition for simultaneous source wavefield separation},
  Geophysical Journal International \textbf{206} (2016), no.~2, 1301--1305.

\bibitem{wapenaar2012deblending}
Kees Wapenaar, Joost van~der Neut, and Jan Thorbecke, \emph{Deblending by
  direct inversion}, Geophysics \textbf{77} (2012), no.~3, A9--A12.

\bibitem{wittsten2018perturbations}
Jens Wittsten, Fredrik Andersson, Johan O.~A. Robertsson, Dirk-Jan van Manen,
  and Lasse Amundsen, \emph{Perturbations of time shifts in signal apparition},
  SIAM Journal on Applied Mathematics \textbf{78} (2018), no.~5, 2840--2864.

\bibitem{wittsten2018stability}
\bysame, \emph{Stability of signal apparition}, 80th EAGE Annual Meeting, 2018.

\end{thebibliography}

\end{document}